\newtheorem*{problem}{Problem}
\newcommand{\feature}[1]{\texttt{#1}}
\newcommand{\snap}{{Snapchat}\xspace}
\newcommand{\ours}{{\textsf{FATE}}\xspace}
\newcommand{\tgcn}{{tGCN}\xspace}
\newcommand{\tlstm}{{tLSTM}\xspace}
\newcommand{\regionone}{\textsf{Region 1}\xspace}
\newcommand{\regiontwo}{\textsf{Region 2}\xspace}
\newcommand{\taskone}{\textsf{Task 1}\xspace}
\newcommand{\tasktwo}{\textsf{Task 2}\xspace}
\def \E {\mathcal{E}}
\def \G {\mathcal{G}}
\def \V {\mathcal{V}}
\def \N {\mathcal{N}}
\def \L {\mathcal{L}}
\def \R {\mathbb{R}}
\def \bg {\mathbf{G}}
\def \bt {\mathbf{T}}
\def \A {\mathbf{A}}
\def \F {\mathbf{F}}
\def \X {\mathbf{X}}
\def \h {\mathbf{h}}
\def \W {\mathbf{W}}
\def \S {\mathcal{S}}
\def \h {\mathbf{h}}
\def \x {\mathbf{x}}
\def \tx {\mathbf{\tilde{x}}}
\def \a {\mathbf{a}}
\def \b {\mathbf{b}}
\def \q {\mathbf{q}}
\def \f {\mathbf{f}}
\def \be {\mathbf{E}}
\def \e {\mathbf{e}}
\def \g {\mathbf{g}}
\title{Knowing your FATE: \underline{F}riendship, \underline{A}ction and \underline{T}emporal \underline{E}xplanations for User Engagement Prediction on Social Apps}
\author{Xianfeng Tang$^\dagger$, Yozen Liu$^{\ddagger}$, Neil Shah$^\ddagger$, Xiaolin Shi$^\ddagger$, Prasenjit Mitra$^\dagger$, Suhang Wang$^\dagger$}
\affiliation{
\institution{The Pennsylvania State University$^\dagger$, Snap Inc.$^\ddagger$}
}
\affiliation{
  \institution{\{xut10, pum10, szw494\}@psu.edu  \{yliu2, nshah, xiaolin\}@snap.com}
}
\begin{document}

\begin{abstract}
  With the rapid growth and prevalence of social network applications (Apps) in recent years, understanding user engagement has become increasingly important, to provide useful insights for future App design and development.
  While several promising neural modeling approaches were recently pioneered for accurate user engagement prediction, their black-box designs are unfortunately limited in model explainability.
  In this paper, we study a novel problem of explainable user engagement prediction for social network Apps.
  First, we propose a flexible definition of user engagement for various business scenarios, based on future metric expectations.
  Next, we design an end-to-end neural framework, \ours, which incorporates three key factors that we identify to influence user engagement, namely \underline{f}riendships, user \underline{a}ctions, and \underline{t}emporal dynamics to achieve \underline{e}xplainable engagement predictions.  \ours is based on a tensor-based graph neural network (GNN), LSTM and a mixture attention mechanism, which allows for (a) predictive explanations based on learned weights across different feature categories, (b) reduced network complexity, and (c) improved performance in both prediction accuracy and training/inference time.
  We conduct extensive experiments on two large-scale datasets from \snap, where \ours outperforms state-of-the-art approaches by ${\approx}10\%$ error and ${\approx}20\%$ runtime reduction. We also evaluate explanations from \ours, showing strong quantitative and qualitative performance. 
\end{abstract}

\maketitle

\section{Introduction}
With rapid recent developments in web and mobile infrastructure, social networks and  applications (Apps) such as \snap and Facebook have risen to prominence.
The first priority of development of most social Apps is to attract and maintain a large userbase. Understanding user engagement plays an important role for retaining and activating users.
Prior studies try to understand the return of existing users 
using different metrics, such as churn rate prediction \cite{yang2018know} and lifespan analysis \cite{yang2010activity}.
Others model user engagement with macroscopic features (e.g., demographic information) \cite{althoff2015donor} and historical statistic features (e.g., user activities) \cite{lin2018ll}.
Recently, \citeauthor{liu2019characterizing} \cite{liu2019characterizing} propose using dynamic action graphs, where nodes are in-App actions, and edges are transitions between actions, to predict future activity using a neural model.

Despite some success, existing methods generally suffer from the following:
\textbf{(1)} They fail to model friendship dependencies or ignore user-user interactions when modeling user engagement. 
As users are connected in social Apps, their engagement affects each other \cite{subramani2003knowledge}. For example, active users may keep posting new contents, which attract his/her friends and elevate their engagement. Thus, it is essential to capture friendship dependencies and user interactions when modeling user engagement.
\textbf{(2)} Engagement objectives may differ across Apps and even across features. For example, an advertising team may target prediction of click-through-rate, while a growth-focused team may care about usage trends in different in-App functions. Therefore, the definition of user engagement must be flexible to satisfy different scenarios.
\textbf{(3)} Existing methods focus on the predicting user engagement accurately, but fail to answer \emph{why} a user engages (or not). Explaining user engagement is especially desirable, since it provides valuable insights to practitioners on user priorities and informs mechanism and intervention design for managing different factors motivating different users' engagement. However, to our knowledge, there are no explainable models for understanding user engagement.

To tackle the aforementioned limitations, we aim to use three key factors: friendship, in-App user actions, and temporal dynamics, to derive explanations for user engagement.
Firstly, since users do not engage in a vacuum, but rather with each other, we consider friendships to be key in engagement.  
For example, many users may be drawn to use an App because of their family and friends' continued use.  
Secondly, user actions dictate how a user uses different in-App features, and hints at their reasons for using the App. 
Thirdly, user behavior changes over time, and often obey temporal periodicity \cite{papapetrou2014social}. Incorporating periodicity and recency effects can improve predictive performance. 

In this work, we first propose measurement of user engagement based on the expectation of metric(s) of interests in the future, which flexibly handles different business scenarios.  Next, we formulate a prediction task to forecast engagement score, based on heterogeneous features identified from friendship structure, user actions, and temporal dynamics.  Finally, to accurately predict future engagement while also obtaining meaningful explanations, we propose an end-to-end neural model called \ours (\underline{F}riendship, \underline{A}ction and \underline{T}emporal \underline{E}xplanations).  In particular, our model is powered by (a) a friendship module which uses a tensor-based graph convolutional network to capture the influence of network structure and user interactions, and (b) a tensor-based LSTM \cite{guo2019exploring} to model temporal dynamics while also capturing exclusive information from different user actions.  \ours's tensor-based design not only improves explainablity aspects by deriving both local (user-level) and global (App-level) importance vectors for each of the three factors using attention and Expectation-Maximization, but is also more efficient compared to classical versions.  We show that \ours significantly outperforms existing methods in both accuracy and runtime on two large-scale real-world datasets collected from \snap, while also deriving high-quality explanations.  To summarize, our contributions are:
\begin{itemize}[leftmargin=*]
    \item We study the novel problem of explainable user engagement prediction for social network applications;
    \item We design a flexible definition for user engagement satisfying different business scenarios;
    \item We propose an end-to-end self-explainable neural framework, \ours, to jointly predict user engagement scores and derive explanations for friendships, user actions, and temporal dynamics from both local and global perspectives; and
    \item We evaluate \ours on two real-world datasets from \snap, showing ${\approx}10\%$ error reduction and ${\approx}20\%$ runtime improvement against state-of-the-art approaches.
\end{itemize}

\section{Related Work}
\subsection{User Behaviour Modeling}
 
Various prior studies model user behaviours for social network Apps.
Typical objectives include churn rate prediction,  return rate analysis, intent prediction, etc \cite{au2003novel,kawale2009churn,kapoor2014hazard,benson2016modeling,lo2016understanding,yang2018know,kumar2018did,liu2019characterizing} and anomaly detection \cite{shah2017flock,lamba2019modeling,shah2017many}.
Conventional approaches rely on feature-based models to predict user behaviours. They usually apply learning methods on handcrafted features.
For example, \citeauthor{kapoor2014hazard}\cite{kapoor2014hazard} introduces a hazard based prediction model to predict user return time from the perspective of survival analysis;
\citeauthor{lo2016understanding}\cite{lo2016understanding} extract long-term and short-term signals from user activities to predict purchase intent;
\citeauthor{trouleau2016just}\cite{trouleau2016just} introduce a statistical mixture model for viewer consumption behavior prediction based on video playback data.
Recently, neural models have shown promising results in many areas such as computer vision and natural language processing, and have been successfully applied for user modeling tasks \cite{elkahky2015multi,yang2018know,liu2019characterizing}. \citeauthor{yang2018know}\cite{yang2018know} utilize LSTMs \cite{hochreiter1997long} to predict churn rate based on historical user activities.
\citeauthor{liu2019characterizing}\cite{liu2019characterizing} introduce a GNN-LSTM model to analyze user engagement, where GNNs are applied on user action graphs, and an LSTM is used to capture temporal dynamics.
\textit{Although these neural methods show superior performance, their black-box designs hinder interpretability, making them unable to summarize the reasons
for their predictions, even when their inputs are meaningful user activities features.}


\subsection{Explainable Machine Learning}
Explainable machine learning has gain increasing attention in recent years \cite{gilpin2018explaining}.
We overview recent research on explainable GNN/RNN models, as they relate to our model design.
We group existing solutions into two categories.  The first category focuses on post-hoc interpretation for trained deep neural networks. One kind of model-agnostic approach learns approximations around the predictions, such as linear proxy model \cite{ribeiro2016should} and decision trees \cite{schmitz1999ann,zilke2016deepred}.
Recently, \citeauthor{ying2019gnn}\cite{ying2019gnn} introduce a post-hoc explainable graph neural network to analyze correlations between graph topology, node attributes and predicted labels by optimizing a compact
subgraph structure indicating important nodes and edges.
\textit{However, post-analyzing interpretations are computationally
inefficient, making it difficult to deploy on large systems. Besides, these methods do not help predictive performance.}
The second group leverages attention methods to generate explanations on-the-fly, and gained tremendous popularity due to their efficiency \cite{xu2018raim,guo2019exploring,choi2018fine,pope2019explainability,shu2019defend}. For example,  \citeauthor{pope2019explainability}\cite{pope2019explainability} extend explainability methods for convolutional neural networks (CNNs) to cover GNNs; \citeauthor{guo2019exploring}\cite{guo2019exploring} propose an interpretable LSTM architecture that distinguishes the contribution of different input variables to the prediction.
\textit{Despite these attention methods successfully provides useful explanations, they are typically designed for one specific deep learning architecture (e.g., LSTMs or CNNs). How to provide attentive explanations for hierarchical deep learning frameworks with heterogeneous input is yet under-explored.}

\section{Preliminaries} \label{prelim}
\begin{figure}[t]
    \centering
    \includegraphics[width=0.7\columnwidth]{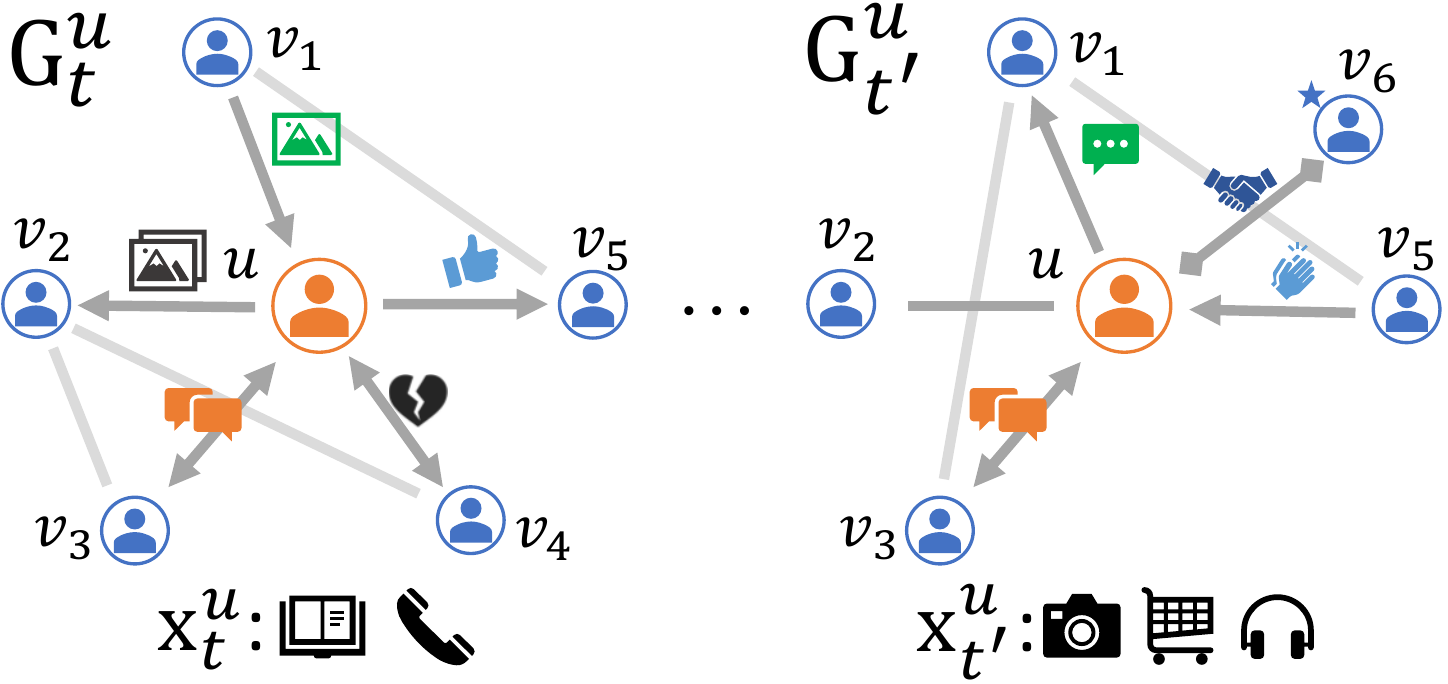}
    \vspace{-1em}
    \caption{User graphs are temporal, and capture friendship structure, user actions (node features), and user-user interactions (edge features) over various in-App functions.}
    \label{fig:problem_def}
    \vspace{-1.8em}
\end{figure}

First, we define notations for a general social network App.
We begin with the \textit{user} as the base unit of an App. Each user represents a registered individual. We use $u$ to denote a user.
We split the whole time period (e.g., two weeks) into equal-length continuous \textit{time intervals}. The length of time intervals can vary from hours to days.
The past $T$ time intervals in chronological order are denoted as $1, 2,\cdots, T$.
Users are connected by \textit{friendship}, which is an undirected relationship. Namely, if $u$ is a friend of $v$, $v$ is also a friend of $u$. Note that friendship is time aware, users can add new friends or remove existing friends at any given time.
Users can also use multiple in-App features, like posting a video, chatting with a friend, or liking a post on Facebook; we call these various \textit{user actions}.  We use a time-aware feature vector to represent the user action for each specific user.
A typical feature of social network Apps is in-App communication. By sending and receiving messages, photos, and videos, users share information and influence each other.  We call these \textit{user interactions}.

\textbf{User graph}: To jointly model user activities and social network structures, we define a temporal \textit{user graph} for every user at time $t$ as $\bg_t^u = (\V_t^u, \E_t^u, \X_t^u, \be_t^u)$.
Here $\V_t^u = \{u\} \cup \N_t(u)$ denotes the nodes in $\bg_t^u$,
where $\N_t(u)$ is a group of users related to $u$, 
the set of edges $\E_t^u$ represents friendships, nodal features $\X_t^u$ characterize user actions, 
and features on edges $\be_t^u$ describe user interactions.
Note that we split nodal features into $K$ categories, so that each category of features is aligned with a specific user action, respectively.  Thus, both the topological structure and the features of user graphs are temporal.
In particular, for any given node $u$, its feature vector (i.e., a row of $\X_t$) is represented by $\x_t^u = [\x_{t,1}^u, \cdots, \x_{t,K}^u]$, where $\x_{t,k}^u \in \R^{d_k}$ is the $k$-th category of features, and $[\cdot]$ denotes concatenation alongside the row.
There are many ways to define the graph structure.  One example of selecting $\bg$ is based on ego-networks, as shown in Figure \ref{fig:problem_def}; here, $\N_t(u)$ is the set of friends of $u$, which reduces the size of graph sharply compared to using the whole social network.  Each individual can take different actions in every time interval to control and use in-App functions.

\textbf{Defining user engagement}:
Because of the dynamism of user activities, social network structure, and the development of the App itself, the user engagement definition should be specified for every user and every time interval.  Besides, the primary focus of user engagement varies widely depending on the specific business scenario.
For example, Facebook may utilize login frequency to measure engagement, while \snap may use the number of messages sent.
Thus, user engagement requires a flexible definition which can meet different needs.
To tackle above challenges, we define user engagement score using the \textit{expectation of a metric of interest in the future}, as:
$
    e_t^u = \mathop{\mathbb{E}}(\mathcal{M}(u, \tau)| \tau \in [t, t + \Delta t]))
$, 
where $\mathcal{M}$ is the metric of interest, and $\Delta t$ denotes a future time period.
Both the metric and the time interval can be adjusted by scenario.

\begin{figure}[t]
    \vspace{-1em}
    \centering
    \includegraphics[width=.95\columnwidth]{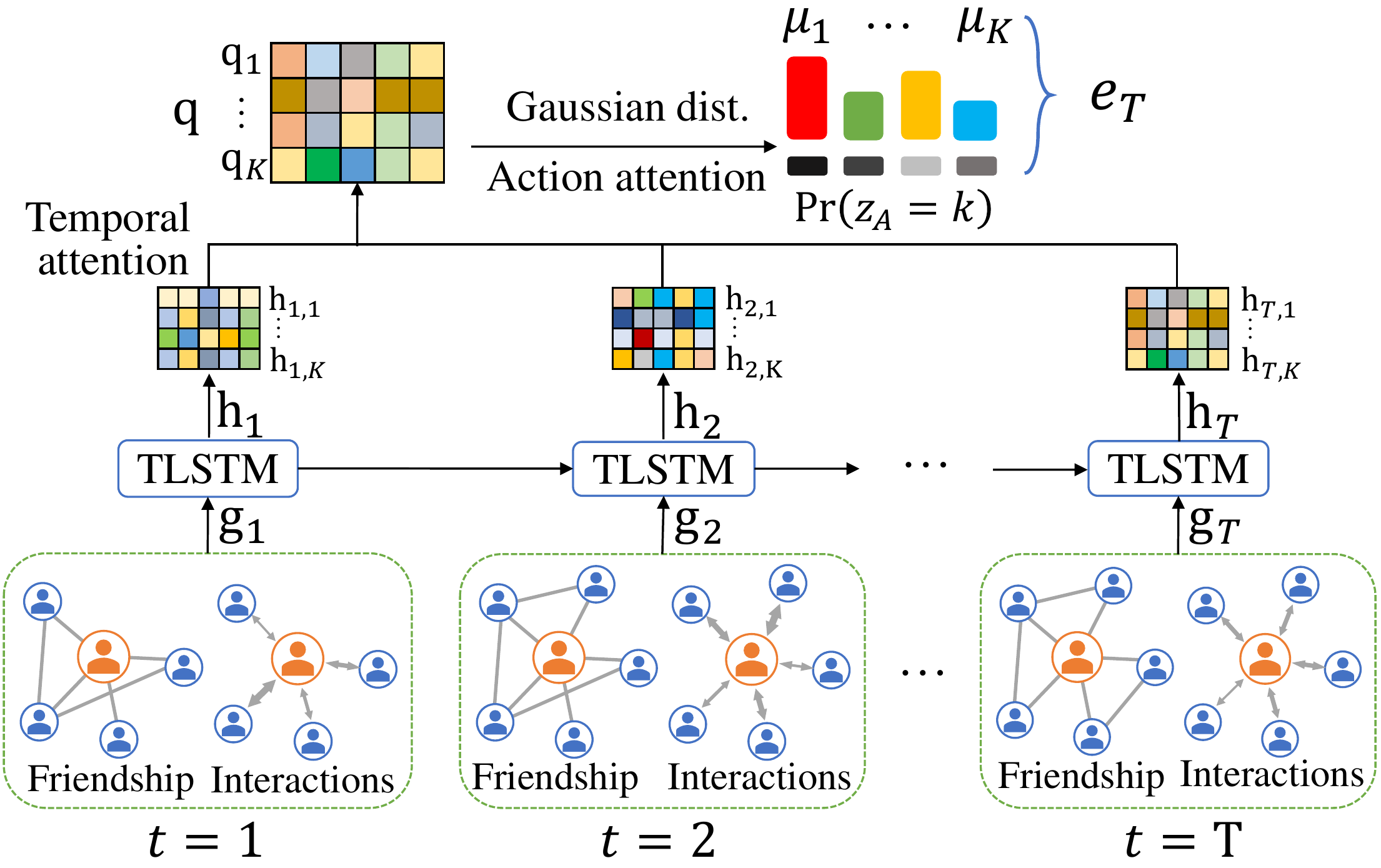}
    \vspace{-1em}
    \caption{Overall framework of \ours: \tgcn-based friendship modules capture local network structure and user interactions at each timestep, and \tlstm captures temporal dynamics for distinct user actions.  Finally, an attention mixture mechanism governs user engagement prediction.}
    \vspace{-1.5em}
    \label{fig:framework}
\end{figure}

\textbf{Explaining user engagement}: We identify three key factors that highly impact the user engagement, including user action, temporal dynamics, and friendship.
The interpretation is to derive importance/influence of these three factors for user engagement.
In particular, we aim at interpreting user engagement from both \textit{local} (i.e., for individual users) and \textit{global} (i.e., for the whole group of people, or even the entire App) perspectives.
The local interpretations for individual users are formulated as following vectors:
(1) User action importance $\A^u \in \R_{\geq 0}^K$, $\sum_{k=1}^{K} \A_k^u = 1$, which assigns each user action a score that reflects its contribution to user engagement.
(2) Temporal importance $\bt^u \in \R_{\geq 0}^{T \times K}$, $\sum_{t=1}^{T} \bt_{tk}^u = 1$ for $k=1,\cdots,K$, which identifies the importance of user actions over every time interval for the engagement;
(3) Friendship importance $\F^u \in \R_{\geq 0}^{|t \times \N_t(u)|}$, $\sum_{v \in \N_t(u)} \F_{tv}^u = 1$ for $t=1,\cdots,T$, which characterizes the contributions of friends to user engagement of $u$ over time.
For user action and temporal dynamics, we also derive explanations from a global view since they are shared by all users.
Specifically, we formulate (1) global user action importance $\A^* \in \R_{\geq 0}^K$, $\sum_{k=1}^{K} \A^*_k = 1$ and (2) global temporal importance $\bt^* \in \R_{\geq 0}^{T \times K}$, $\sum_{t=1}^{T} \bt^*_{tk} = 1$ for $k=1,\cdots,K$. Compared to local explanations which help understand individual user behaviors, global explanations inform overall App-level user behaviors.

We pose the following problem formalization:
\vspace{-0.5em}
\begin{problem}[Explainable Engagement Prediction] \label{problem_def}
Build a framework that (a) for every user $u$, predicts the engagement score $e_T^u$ with explanations $\A^u$, $\bt^u$ and $\,\F^u$ based on the historical user graphs $\bg_1^u,\cdots,\bg_T^u$, and (b) generates global explanations $\A^*$ and $\bt^*$.
\end{problem}

\section{Our Approach: {\ours}}

We next introduce our proposed approach for explainable engagement prediction, \ours.  Firstly, \ours leverages specific designed friendship modules (bottom of Figure \ref{fig:framework}) to model the non-linear social network correlations and user interactions from user graphs of a given user as input. The friendship modules aggregate user graphs and generate representations for user graphs accordingly. These graph representations preserve exclusive information for every time interval and every user action.
Next, a temporal module based on tensor-based LSTM \cite{guo2019exploring} (\tlstm, middle part of Figure \ref{fig:framework}) is utilized to capture temporal correlations from graph representations.
Finally, a mixture of attention mechanisms (top of Figure \ref{fig:framework}) is deployed to govern the prediction of user engagement based on the output of \tlstm, while also jointly deriving importance vectors as explanations.  An illustration of the framework is given in Figure \ref{fig:framework}.
We discuss \ours in detail in the following text.

\subsection{Friendship Module}
\begin{figure}[t]
\vspace{-1em}
    \centering
    \includegraphics[width=0.9\columnwidth]{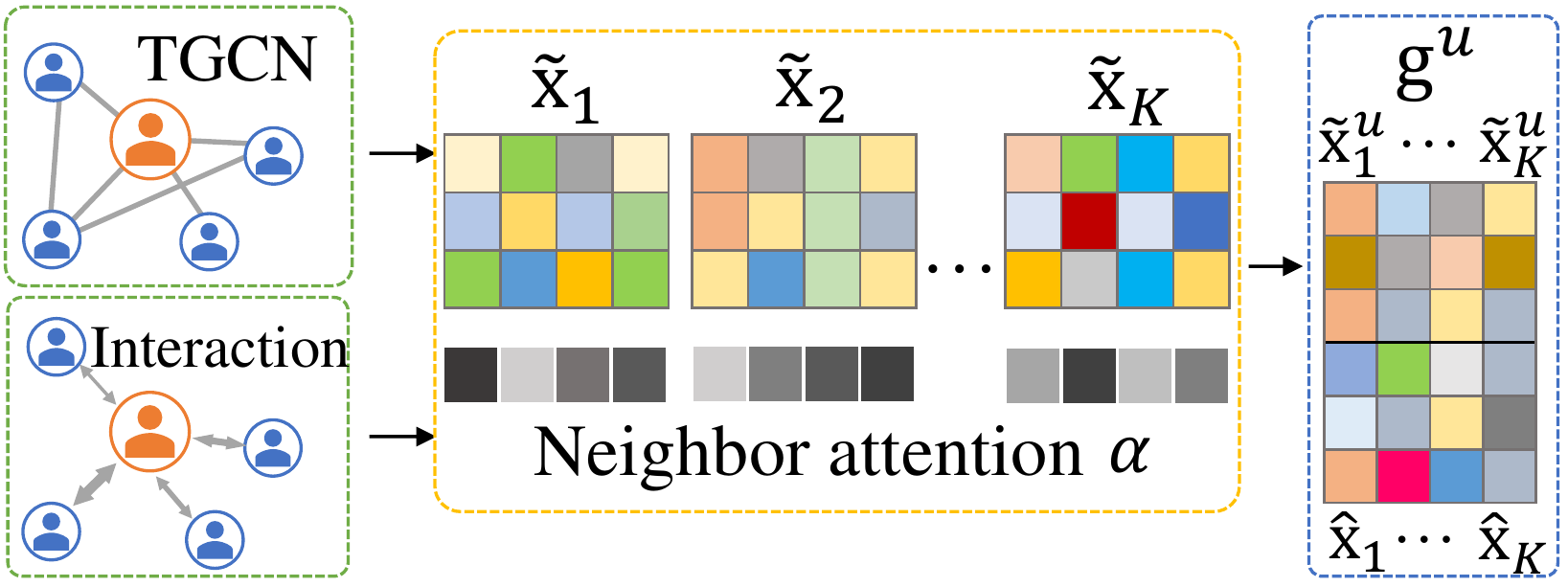}
    \vspace{-1.5em}
    \caption{Our proposed friendship module uses a tensor-based GCN with neighbor attention to generate user graph embeddings jointly from ego-networks and interactions.}
    \label{fig:tgcn}
    \vspace{-1.5em}
\end{figure}

As shown in Figure \ref{fig:tgcn}, the goal of the friendship module is to model the non-linear correlation of social network structure and user interactions in every user graph $\bg^u_t$.
Naturally, graph neural networks (GNNs) \cite{ma2019graph,ma2019multi,tang2020transferring,jin2020graph} can be applied to capture the dependencies of users. We choose the popular graph convolutional networks (GCNs) \cite{kipf2016semi} as our base GNN model. A GCN takes a graph as input, and encodes each node into an embedding vector. The embedding for each node is updated using its neighbor information on each layer of a GCN as:
\begin{equation} \label{eqn:gcn}
\small
    {\tx}^u = \sigma \left( \sum_{v \in \N(v)}\x^v\W \right),
\end{equation}
where $\x$ and $\tilde{\x}$  denote input feature and output embedding of the layer, respectively, $\W$ is a feature transformation matrix, and  $\sigma(\cdot)$ denotes a non-linear activation.

However, adopting vanilla GCN in our case is not ideal, because matrix multiplication in GCN mixes all features together. It is difficult to distinguish the importance of input features by looking at the output of a GCN layer.
To tackle this limitation, we propose a \textit{tensor-based GCN} (\tgcn), which uses a tensor of learnable parameters.
The updating rule of one \tgcn layer is:
\begin{equation} \label{eqn:tgcn}
\small
    \tilde{\x}^u = \sigma \left( \sum_{v \in \N(v)}  \x^v \otimes \mathcal{W} \right),
\end{equation}
where $\mathcal{W} = \{\W_1, \cdots, \W_K\}$, $\W_k \in \R^{d_k \times d^\prime}$, is a set of $K$ parameter matrices corresponding to each group of features, and  $\x^v \otimes \mathcal{W} =  [\x^v_1\W_1, \cdots, \x^v_K\W_K] \in \R^{K \times d^\prime}$, $\x^v_k\W_k \in \R^{1 \times d^\prime}$ maps each category of features from the input to the output space separately (as illustrated by different matrices in the middle part of Figure \ref{fig:tgcn}).
Note that each element (e.g. row) of the hidden matrix in a \tgcn layer encapsulates information exclusively from a certain category of the input, so that the following mixture attention can distinguish the importance of different user actions and mix exclusive information to improve prediction accuracy.
A \tgcn layer can be treated as multiple parallel vanilla GCN layers, where each layer is corresponding to one category of features that characterizes one user action.
Given a user graph input, We adopt a two-layer \tgcn to encode the friendship dependencies into node embedding:
\begin{equation}
\small
    \tilde{\X} = \sigma \left( \hat{\A}\sigma \left( \hat{\A} \X \otimes \mathcal{W}_0 \right) \otimes \mathcal{W}_1 \right),
\end{equation}
where $\hat{\A}$ is the symmetric normalized adjacency matrix derived from the input user graph, $\X$ are nodal features, and $\mathcal{W}_*$ are parameters.
As input features describe user actions, their exclusive information is preserved in the output of \tgcn as $\tilde{\X} = [\tilde{\X}_1, \cdots, \tilde{\X}_K] \in \R^{K \times d^\prime \times (|\N(v)| + 1)}$, which will be used later for generating engagement predictions and explanations.

\begin{figure}[t]
    \centering
        \begin{subfigure}[b]{.2\textwidth}
            \centering
            \includegraphics[width=\columnwidth]{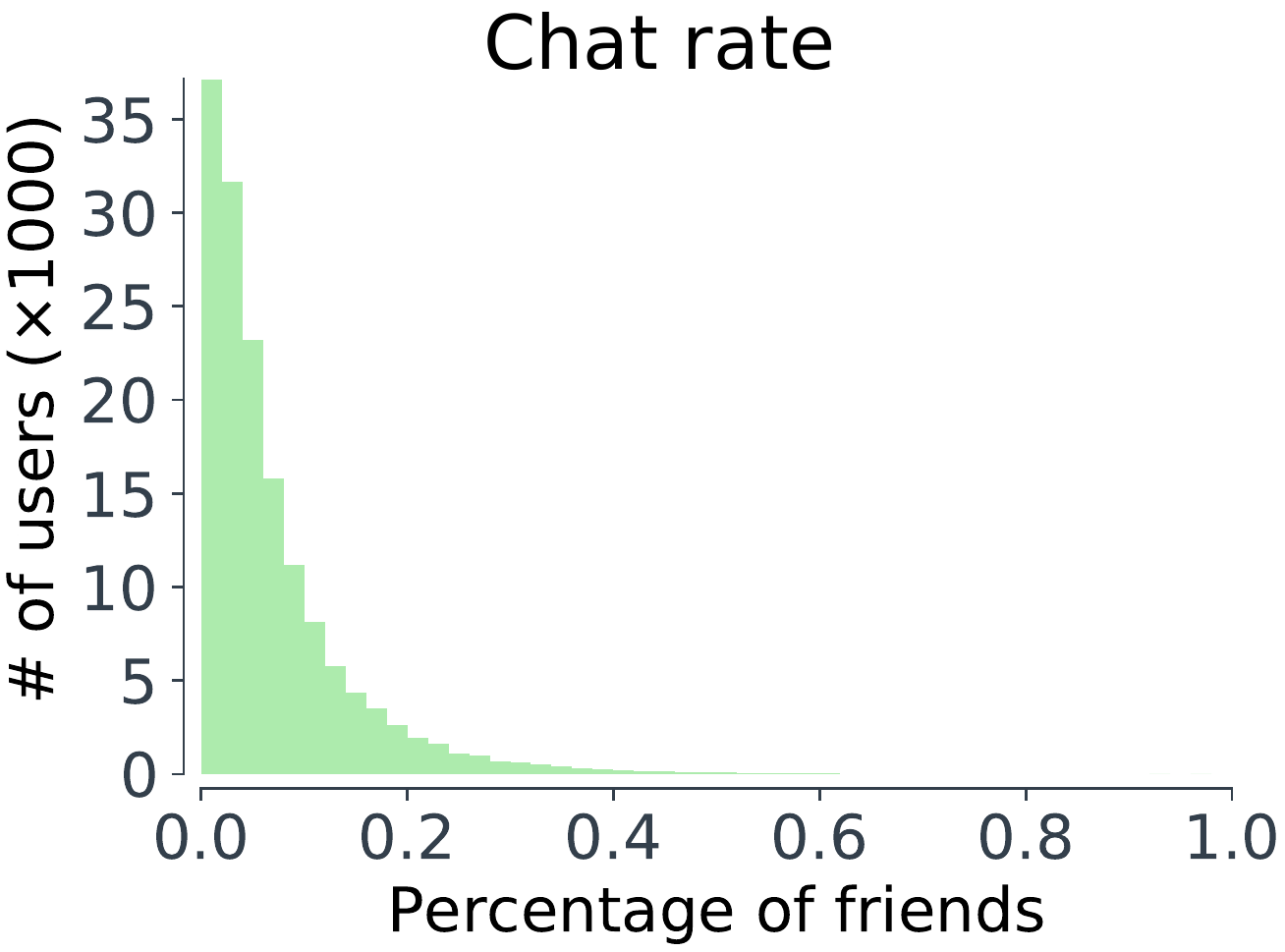}
        \end{subfigure}
        \begin{subfigure}[b]{.2\textwidth}  
            \centering 
            \includegraphics[width=\columnwidth]{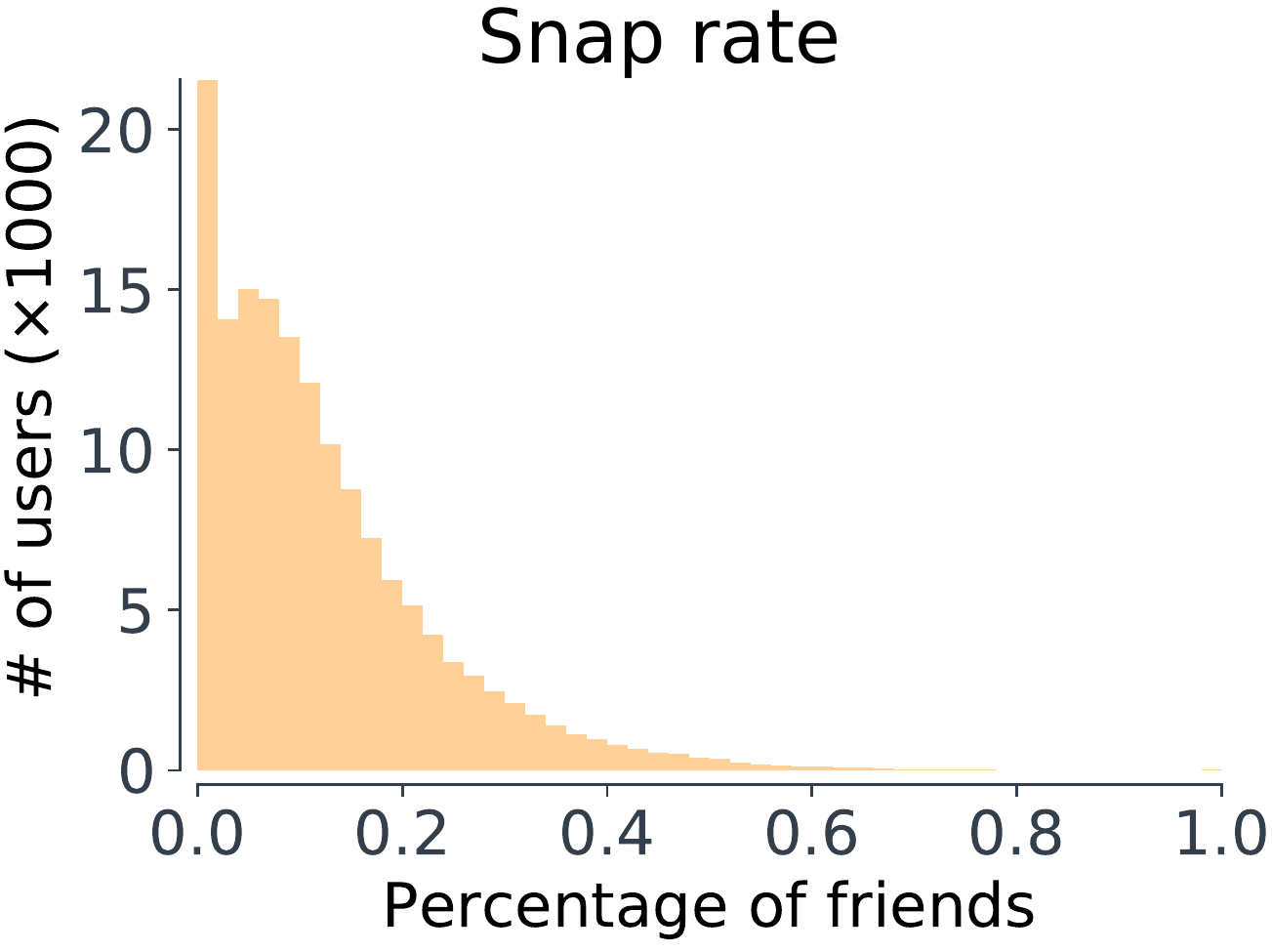}
        \end{subfigure}
    \vspace{-1em}
    \caption{Most users communicate frequently only with a subset (${\leq}20\%$) of their friends, making careful aggregation important when considering influence from neighbors.}
    \label{fig:friendfeq}
    \vspace{-2em}
\end{figure}

The learned node embedding vectors from the \tgcn can be aggregated as a representation for the graph, such as using mean-pooling to average embedding vectors on all nodes.
However, there is a significant disadvantage to such simple solution: namely, the closeness of friends is ignored.
In reality, most users only have a few close friends; users with many friends may only frequently engage with one or few of them.  To validate, we compute the friend communication rate of all \snap users from a selected city (obscured for privacy reasons). Specifically, we compute the percentage of friends that a user has directly communicated (Chat/Snap) with at least once in a two-week span. As Figure \ref{fig:friendfeq} shows, most users mainly communicate with a small percentage (10-20\%) of their friends, and don't frequently contact the remaining ones.
Therefore, friendship activeness is key in precisely modeling the closeness of users.
To this end, we propose a friendship attention mechanism \cite{vaswani2017attention} to quantify the importance of each friend. Formally, a normalized attention score is assigned for each friend $v\in \N(u)$:
\begin{equation} \label{eqn:fnd_attn}
\small
    \alpha_v = \frac{\text{exp} \left( \phi \left( \tx^v \oplus \e^v \right) \right)}{\sum_{\nu  \in \N(u)} \text{exp}\left( \phi \left(\tx^\nu \oplus \e^\nu\right)\right)},
\end{equation}
where  $\tilde{\x}^v$ is the embedding vector of node $v$ from the tensor-based GCN, $\e^v$ is the edge feature on edge between $u$ and $v$, $\oplus$ denotes concatenation, and $\phi(\cdot)$ is a mapping function (e.g., a feed-forward neural network).
Both user actions (preserved by node embedding vectors) and user interactions are considered by the friendship attention mechanism.
To obtain graph representations, we first get the averaged embedding from all friend users weighted by the friendship attention score: 
\begin{equation}
\small
    \hat{\x} = \sum_{v\in \N(u)}\alpha_v\tx^v.
\end{equation}
Then we concatenate it with the embedding vectors on node $u$ alongside each feature category to get the graph embedding:
\begin{equation} \label{eqn:graphemb}
\small
    \g^u = \tx^u \oplus \hat{\x} = \left[\tx^u_1 \oplus \hat{\x}_1, \cdots, \tx^u_K \oplus \hat{\x}_K\right],
\end{equation}
as shown in the right part of Figure \ref{fig:tgcn}.
Note that $\tx^u_k \oplus \hat{\x}_k$ is specifically learned from user action $k$, and  $\g^u \in \R^{K \times (2d^\prime)}$ preserves exclusive information for every user action.
Given $\g^u_1, \cdots \g^u_T$ from $T$ historical user graphs, the next step is to capture temporal dynamics using the temporal module.
\vspace{-.5em}

\subsection{Temporal Module}
As user activities and interactions evolve over time, modeling its temporal dynamics is a key factor of an accurate prediction for user engagement.
Inspired by the success of prior studies for modeling sequential behavior data \cite{liu2019characterizing,yang2018know,yao2019revisiting,tang2019joint} with recurrent neural networks, we utilize LSTM \cite{hochreiter1997long} to capture the evolvement of dynamic user graphs. 
Specifically, we adopt \tlstm following \citeauthor{guo2019exploring}\cite{guo2019exploring}. 
Mathematically, the transformation at each layer of the \tlstm is as follows:
\vspace{-.3em}
{\small
\begin{align} \label{eqn:lstm}
    \nonumber \f_t &= \sigma\left( \g^u_t \otimes \mathcal{U}_f + \h_{t-1} \otimes \mathcal{U}_f^\h + \b_f\right), \\
    \nonumber \mathbf{i}_t &=  \sigma\left( \g^u_t \otimes \mathcal{U}_i + \h_{t-1} \otimes \mathcal{U}_i^\h + \b_i\right), \\
    \nonumber \mathbf{o}_t &= \sigma\left( \g^u_t \otimes \mathcal{U}_o + \h_{t-1} \otimes \mathcal{U}_o^\h + \b_o\right), \\
    \nonumber \mathbf{c}_t &= \f_t \odot \mathbf{c}_{t-1} + \mathbf{i}_t \odot \text{tanh}\left( \g^u_t  \otimes \mathcal{U}_c + \h_{t-1} \otimes \mathcal{U}_c^\h + \b_c\right), \\
    \h_t &= \mathbf{o}_t \odot \text{tanh}\left(\mathbf{c}_t\right),
\end{align}
}
\vspace{-0.3em}
where $\odot$ denotes element-wise multiplication, $\mathcal{U}_*$, $\mathcal{U}_*^\h$ and $\b_*$ are parameters.
Similar to \tgcn, \tlstm can also be considered as a set of parallelized LSTMs, where each LSTM is responsible for a specific feature group corresponding to its user action.
Because the input graph embedding vectors $\g_1^u, \cdots, \g_T^u$ to \tlstm are specific to each feature category (user action), \tlstm can capture the exclusive temporal dependencies of each user action separately.
Similar to $\x$, we define the hidden states of \tlstm as $\h_t = [\h_{t,1},\cdots, \h_{t,K}]$ where $\h_{t,k}$ is exclusively learned for user action $k$.
We further use the hidden states to generate the engagement scores.

\subsection{User Engagement Score Generation}
As aforementioned, user action, temporal dynamics, and friendship are key factors to characterize and predict user engagement.
We introduce three latent variables as $z_A$, $z_J$, $z_I$ to represent different user actions (feature category), time intervals, and friends, respectively so that we can distinguish the influence of specific actions, time intervals, and friends.
For example, different friends may contribute unequally to user engagement; and certain in-App functions could have higher contributions.
Introducing latent variables also bridges the gap between learning explanations and predicting engagement. The desired explanations are importance vectors that constrain the posteriors of latent variables, and further govern the generating of user engagement scores (introduced in Section \ref{sce:explain}).
Specifically, \ours generates user engagement predictions as follows:
{\small
\begin{align}  \label{eqn:gen}
    \nonumber & p\left( e_T|\{\bg_*\}\right) = \sum_{k = 1}^{K} \sum_{t = 1}^{T} \sum_{v=1}^{|\N(u)|} p\left(e_T, z_A = k, z_J = t, z_I = v | \{\bg_*\}\right) \\
    \nonumber & = \sum_{k = 1}^{K} \sum_{t = 1}^{T} \sum_{v=1}^{|\N(u)|} \underbrace{p\left(e_T|z_A = k, z_J = t, z_I = v; \tilde{\x}^v\right)}_{\text{node embedding}} \\
    \nonumber& \cdot \underbrace{p\left(z_I = v | z_J = t, z_A = k, \bg_t\right)}_{\text{friendship attention}}
     \cdot \underbrace{p\left(z_J = t |z_A = k, \{\h_{*,k}\}\right)}_{\text{temporal attention}} \\
     & \underbrace{\cdot p\left(z_A = k |\{\h_*\}\right)}_{\text{user action attention}},
\end{align}
}%
where $\{\h_*\}$ denotes $\{\h_1 \ldots \h_T\}$, and $\{\h_{*,k}\}$ denotes $\{\h_{1, k} \ldots  \\ \h_{T, k}\}$.
The joint probability distribution is further estimated using the conditional probability of latent variables $z_I$, $z_J$, $z_A$, which characterize how user engagement scores are affected by the friendship, temporal dynamics, and user actions accordingly.
We keep designing \ours in accordance with the generation process in Eqn. \ref{eqn:gen}.
In particular, node embeddings are first computed exclusively for every friend, time interval, and user action with proposed \tgcn.
Next, friendship attention $p(z_I = v | z_J = t, z_A = k, \bg_t)$ is estimated using Eqn. \ref{eqn:fnd_attn}.
The summation over $v$ in Eqn. \ref{eqn:gen} is derived by graph representations from friendship modules.
Then \tlstm encapsulates temporal dynamics of graph representation.
The conditional probability of $z_J$ is given as a temporal attention over $\{\h_{*,k}\}$:
{\small
\begin{equation}
    \beta_{t,k} = p\left(z_J = t |z_A = k, \{\h_{*,k}\}\right) = \frac{\text{exp}\left(\varphi_k\left(\h_{t,k}\right)\right)}{\sum_{\tau = 1}^{T}\text{exp}\left(\varphi_k\left(\h_{\tau,k}\right)\right)},
\end{equation}
}%
where $\varphi_k(\cdot)$ is a neural network function specified for user action type $k$. 
Using temporal attention, each user action is represented by its exclusive summarization over all past time intervals as 
\begin{equation}
\small
    \a_k = \sum_{t = 1}^{T} \beta_{t,k}\h_{t,k}
\end{equation}
Finally, we approximate $p(z_A = k |\{\h_*\})$ as the user action attention with another softmax function:
\begin{equation}
\small
    p\left(z_A = k|\{\h_*\}\right) = \frac{\text{exp}\left(\phi\left(\a_k \oplus \h_{T,k}\right)\right)}{\sum_{\kappa = 1}^{K}\text{exp}\left(\phi\left(\a_\kappa \oplus \h_{T,\kappa}\right)\right)},
\end{equation}
where $\phi(\cdot)$ is parameterized by a neural network. 

To approximate the summation over all time intervals ($t=1,\cdots,T$) in Eqn. \ref{eqn:gen},
we use Gaussian distributions to estimate the contribution of every user action to user engagement. Specifically, we use $N(\mu_k, sd_k) = \psi_k(\a_k \oplus \h_{T,k})$ to parameterize the Gaussian distribution for user action $k$. Here $\psi_k(\cdot)$ is also a neural network.
By integrating over all user actions, the user engagement score is derived as:
\begin{equation}
\small
    p\left(e_T\right) = \sum_{k=1}^K N\left(\mu_k, sd_k\right) \cdot p\left(z_A = k|\{\h_*\}\right).
\end{equation}

\subsection{Explainable User Engagement} \label{sce:explain}
To interpret the predicted user engagement, \ours learns the importance vectors
as explanations.
Similar to many previous studies (e.g., \cite{qin2017dual,choi2018fine,xu2018raim,guo2019exploring}),
the local explanations for individual users are directly derived from proposed mixture attentions.
Specifically, the friendship attention, temporal attention and user action attention are acquired as importance vectors for friendship, temporal and user action, respectively.
Because the computation of these attention scores are included by \ours, it takes no extra cost to derive local explanations.
Local explanations reflect specific characteristics and preferences for individual users, which can change dynamically for certain users.

However, local explanations could only help us understand user engagement from individual level. Taking user action as an example, the distribution of its importance vector could vary a lot among different users (see experiments in Section \ref{exp:act_imp} as an example). Because some functions of the App cannot be personalized for every user, it is necessary to interpret their contributions from a global view. For example, when distributing a new feature in an A/B test, it is more reasonable to understand the impact of the feature globally.
Under such circumstances, we formulate the global interpretation of user engagement as a learning problem, where the global importance vectors are jointly learned with the model.
Taking the global importance vector for user action $\A^*$ as an example, we adopt the Expectation–Maximization (EM) method to learn $\A^*$ jointly with the optimization of model parameters $\theta$: 
{\small
\begin{align} \label{eqn:em}
    \nonumber \L(\theta, \A^*) =  & - \sum_{u\in\S} \mathop{\mathbb{E}}_{q_{A}^u} \left[\text{log}\ p\left(e_T^{u}|z_A^{u}; \{\bg_*^u\}\right)\right] \\ 
   &  - \mathop{\mathbb{E}}_{q_{A}^u} \left[\text{log}\ p\left(z_A^{u}|\{\h_*^{u}\}\right)\right] - \mathop{\mathbb{E}}_{q_{A}^u} \left[\text{log}\ p\left(z_A^{u}|\A^*\right)\right],
\end{align}
}%
where the summation $\sum$ is applied over all training samples $\S$, and $q_{A}^u$ denotes the posterior distribution for ${z_A}^u$:
{\small
\begin{align}
    \nonumber q_{A}^u & =  p\left(z_A^{u}|\{\G_*^{u}\}, e_T^{u}, \theta\right) \propto p\left(e_T^{u}|z_A^{u}, \{\G_*^{u}\}\right) \cdot p\left(z_A^{u}|\{\G_*^{u}\}\right) \\
    & \approx p\left(e_T^{u}|z_A^{u}, \q_k^{u} \oplus \h_{T,k}^{u}\right) \cdot p\left(z_A^{u}|\{\h_*^{u}\}\right).
\end{align}
}%
The last term in Eqn. \ref{eqn:em} serves as a regularization term over the posterior of $z_A^{u}$. Note that the posterior of $z_A^{u}$ governs the user action attention. Consequently, the regularization term  encourages the action importance vectors of individual users to follow the global pattern parameterized by $\A^*$.
Moreover, we can derive the following closed-form solution of $\A^*$ as:
\begin{equation}
\small
    \A^* = \frac{1}{|\S|}\sum_{u\in\S} q_{A}^u,
\end{equation}
which takes both user action attention and the prediction of user engagement into consideration. The learning of user action importance relies on the estimation of posterior $q_{A}^u$.
During training stage, network parameters $\theta$ and the posterior $q_{A}^u$ are estimated alternatively.
Namely, we first freeze all parameters $\theta$ to evaluate $q_{A}^u$ over the batch of samples, then use the updated $q_{A}^u$ with gradient descent to update $\theta$ by minimizing \ref{eqn:em}.
Similarly for the global temporal importance, we derive the following closed-form solution:
\begin{equation}
\small
    \bt^*_{t,k} = \frac{1}{|\S|}\sum_{u\in\S} \beta_{t,k}.
\end{equation}

\subsection{Complexity Analysis}
The proposed \tgcn and adopted \tlstm \cite{guo2019exploring} are more efficient than their vanilla versions. Specifically, we have:
\begin{theorem} \label{theo:efficient}
Let $d_{\text{in}}$ and $d_{\text{out}}$ denote input and output dimensions of a layer. The tensor-based designs for GCN and LSTM reduce network complexity by $(1-1/K)d_{\text{in}} \cdot d_{\text{out}}$ and $4(1-1/K)(d_{\text{in}} + d_{\text{out}})d_{\text{out}}$ trainable parameters, and reduce the computational complexity by $\mathcal{O}\left(d_{\text{in}} \cdot d_{\text{out}}\right)$ and $\mathcal{O}\left(\left(d_{\text{in}} + d_{\text{out}}\right)d_{\text{out}}\right)$, respectively.
\end{theorem}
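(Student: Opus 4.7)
The plan is to reduce the theorem to a direct parameter-counting and FLOP-counting exercise, after making the relevant dimension conventions explicit. Both \tgcn and \tlstm can be viewed as standard architectures whose dense weight matrices have been replaced by block-diagonal ones induced by the $K$-way feature partition; the savings then follow from the difference between a full $d_{\text{in}} \times d_{\text{out}}$ matrix and its block-diagonal counterpart.

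First I would fix the dimension conventions. For an input vector $\x = [\x_1, \ldots, \x_K]$ with $\x_k \in \R^{d_k}$ and $\sum_k d_k = d_{\text{in}}$, and per-category output dimension $d'$ giving total output dimension $d_{\text{out}} = K d'$, I would assume (as implicit in the paper) a uniform split $d_k = d_{\text{in}}/K$. Then I would compare a vanilla GCN layer, whose weight matrix $\W \in \R^{d_{\text{in}} \times d_{\text{out}}}$ carries $d_{\text{in}} \cdot d_{\text{out}}$ parameters, to a \tgcn layer, whose weight tensor $\mathcal{W} = \{\W_k\}_{k=1}^K$ with $\W_k \in \R^{(d_{\text{in}}/K) \times d'}$ carries $K \cdot (d_{\text{in}}/K) \cdot d' = d_{\text{in}} \cdot d_{\text{out}}/K$ parameters. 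Subtracting gives the claimed $(1-1/K) d_{\text{in}} \cdot d_{\text{out}}$ reduction. For the forward pass, each block multiplication $\x_k \W_k$ costs $\mathcal{O}((d_{\text{in}}/K) \cdot d')$; summing over $k$ yields $\mathcal{O}(d_{\text{in}} \cdot d_{\text{out}}/K)$ versus $\mathcal{O}(d_{\text{in}} \cdot d_{\text{out}})$ for vanilla GCN, and the difference is $\mathcal{O}(d_{\text{in}} \cdot d_{\text{out}})$.

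Next I would repeat the argument for \tlstm by noting that a standard LSTM cell (Eqn.~\ref{eqn:lstm}) has four gates $\f, \mathbf{i}, \mathbf{o}, \mathbf{c}$, each parameterized by an input-to-hidden matrix in $\R^{d_{\text{in}} \times d_{\text{out}}}$ and a hidden-to-hidden matrix in $\R^{d_{\text{out}} \times d_{\text{out}}}$, giving $4(d_{\text{in}} + d_{\text{out}}) d_{\text{out}}$ trainable weights (ignoring biases, which are lower order). Under the same block-diagonal replacement, each of the $4 \cdot 2 = 8$ matrices shrinks by a factor of $K$, so the tensor variant has $4(d_{\text{in}} + d_{\text{out}}) d_{\text{out}}/K$ weights, and the saving is $4(1-1/K)(d_{\text{in}} + d_{\text{out}}) d_{\text{out}}$. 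The computational complexity argument is identical per gate, summed across all four gates, yielding an $\mathcal{O}((d_{\text{in}} + d_{\text{out}}) d_{\text{out}})$ reduction.

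There is no genuinely hard step: everything reduces to counting entries in block-diagonal matrices and noting that the activations $\sigma$, tanh, the neighborhood sum $\sum_{v \in \N(v)}$, and the elementwise products $\odot$ do not depend on whether the weights are full or block-diagonal, so they contribute equally on both sides and drop out of the difference. The only place I would be careful is in stating the assumptions clearly, namely (i) the uniform split $d_k = d_{\text{in}}/K$ and output partition $d_{\text{out}} = K d'$, so that the $K$ blocks are comparable, and (ii) the convention that \emph{reducing complexity by X} means the gap between the vanilla and tensor-based parameter/FLOP counts is X. I would also remark that if the split is non-uniform the same argument yields a reduction of $d_{\text{in}} \cdot d_{\text{out}} - \sum_k d_k d'$, which specializes to the stated bound under uniformity.
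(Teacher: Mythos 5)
Your proposal is correct and follows essentially the same route as the paper's proof in Appendix~\ref{apd:complexity}: under the uniform-split assumption $d_k = d_{\text{in}}/K$ and per-block output $d_{\text{out}}/K$, count parameters and multiplications for the vanilla versus block-diagonal (tensor-based) weights and subtract, with the shared terms (biases, neighborhood aggregation, activations, elementwise products) cancelling in the difference. Your block-diagonal framing and the remark on non-uniform splits are slight generalizations of the paper's presentation, but the core counting argument is identical.
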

\begin{proof}
\vspace{-5pt}
We provide the proof in Appendix \ref{apd:complexity}.
\end{proof}

As a result, the proposed designs accelerate the training and inference of \ours, and produce a more compact model.  Appendix \ref{apd:runtime_experiments} shows that \ours's tensor-based design reduces training and inference time by ${\approx}20\%$ compared to using the vanilla version (GCN/LSTM).

\section{Evaluation}
In this section, we aim to answer the following research questions: 
\begin{itemize}[leftmargin=*]
\item \textbf{RQ1}:  Can \ours outperform state-of-the-art alternatives in the user engagement prediction task? 
\item \textbf{RQ2}: How does each part/module in \ours affect performance?
\item \textbf{RQ3}: Can \ours derive meaningful explanations for friendships, user actions, and temporal dynamics?
\item \textbf{RQ4}: Can \ours flexibly model different engagement metrics?
\end{itemize}

\subsection{Datasets and Experiment Setup}
We obtain two large-scale datasets from \snap. Each dataset is constructed from all users that live in a different city (on two different continents), we filter out inactive/already churned users.  We follow previous studies on \snap \cite{liu2019characterizing} and collect 13 representative features for user actions on \snap, normalizing to zero mean and unit variance independently before training. 
Table \ref{tab:action} in Appendix provides explains each feature.
We consider 1-day time intervals over 6 weeks. We use the 3 weeks for training, and the rest for testing. We use 2 weeks of user graphs as input to predict engagement in the following week (i.e., $\Delta t = 7d$).

To show that \ours is general for multiple prediction scenarios, we evaluate on two notions of user engagement.  The first metric considers user session time in hours (winsorized to remove extreme outliers). 
The second metric considers \textit{snap} related activities, which are core functions of \snap. We aggregate and average four normalized snap related features, including send, view, create and save, as the measurement for user engagement. 
The prediction of user engagement scores based on two different metrics is denoted by \taskone and \tasktwo, respectively.
We choose root mean square error (RMSE), mean absolute percentage error (MAPE), and mean absolute error (MAE) as our evaluation metrics.  We run all experiments 10 times and report the averaged results. Other technical details are discussed in Appendix \ref{imp_details}. Our code is publicly available on \textbf{Github}\footnote{https://github.com/tangxianfeng/FATE}.

\subsection{Compared Methods}
To validate the accuracy of user engagement prediction,
we compare \ours with the following state-of-the-art methods:
\begin{itemize}[leftmargin=*]
    \item \textbf{Linear Regression} (LR): we utilize the averaged feature vectors of each node in $\bg_t$ as a representation for time interval $t$, and concatenate the vectors over all past time intervals as the input.
    \item \textbf{XGBoost} (XGB) \cite{chen2016xgboost}: We adopt the same prepossessing steps of LR as input for XGBoost.
    \item \textbf{MLP} \cite{hastie2009elements}: We experiment on a two-layer MLP  with the same input features to LR and XGBoost.
    \item \textbf{LSTM} \cite{hochreiter1997long}: LSTM is a popular RNN model for various sequential prediction tasks. We implement a two-layer LSTM which iterates over historical user action features. The final output is fed into a fully-connected layer to generate prediction.
    \item \textbf{GCN} \cite{kipf2016semi}: We combine all historical dynamic friendship graphs into a single graph.
    For each user, we concatenate action features over the observed time period into a new nodal feature vector. 
    \item \textbf{Temporal GCN-LSTM} (TGLSTM) \cite{liu2019characterizing}: TGLSTM is designed to predict future engagement of users, and can be treated as current state-of-the-art baseline. TGLSTM first applies GCN on action graph at each time interval, then leverage LSTM to capture temporal dynamics. We adopt the same design following \citeauthor{liu2019characterizing}\cite{liu2019characterizing} and train TGLSTM on our data to predict the engagement score.
\end{itemize}

To measure the explainability of \ours, we compare with the feature importance of XGB, and LSTM with temporal attention. 
After the boosted trees of XGB are constructed, the importance scores for input features are retrieved and reshaped as an explanation for temporal importance.  For LSTM, we compute attention scores over all hidden states  as an explanation for time intervals.


\subsection{User Engagement Prediction Performance}

\definecolor{LightCyan}{rgb}{0.88,1,1}
\begin{table}[t]
\small
\setlength{\tabcolsep}{0.5pt}
\renewcommand{\arraystretch}{0.9}
\caption{\ours consistently outperforms alternative models in prediction error metrics on both \taskone and \tasktwo, and both datasets \regionone and \regiontwo.}
\vskip -1em
\begin{tabular}{cccccccc}
\toprule
\multicolumn{2}{c}{\multirow{2}{*}{}}  & \multicolumn{3}{c}{\regionone} & \multicolumn{3}{c}{\regiontwo} \\
\cmidrule(lr){3-5}
\cmidrule(lr){6-8}
\multicolumn{2}{c}{}                          & \textbf{RMSE}     & \textbf{MAPE}     & \textbf{MAE}     & \textbf{RMSE}     & \textbf{MAPE}     & \textbf{MAE}   \\ \midrule
\multirow{7}{*}{\rotatebox[origin=c]{90}{\taskone}} & \textbf{LR}                   & .188$\pm$.001    & .443$\pm$.001    & .153$\pm$.000   & .183$\pm$.000    & .375$\pm$.001    & .151$\pm$.000   \\
                        & \textbf{XGB}              & .141$\pm$.000    & .260$\pm$.000    & .101$\pm$.000   & .140$\pm$.000    & .224$\pm$.001    & .098$\pm$.000   \\
                        & \textbf{MLP}                  & .139$\pm$.003    & .233$\pm$.007    & .094$\pm$.004   & .125$\pm$.005    & .238$\pm$.011    & .095$\pm$.004   \\
                        & \textbf{GCN}                  & .131$\pm$.012    & .228$\pm$.019    & .094$\pm$.007   & .128$\pm$.008    & .242$\pm$.010    & .101$\pm$.003   \\
                        & \textbf{LSTM}                 & .121$\pm$.005   & .221$\pm$.003    & .093$\pm$.003   & .122$\pm$.002    & .213$\pm$.005    & .095$\pm$.004   \\
                        & \textbf{TGLSTM}             & .114$\pm$.002    & .215$\pm$.005    & .088$\pm$.000   & .122$\pm$.005    & .201$\pm$.004    & .093$\pm$.002   \\
                        
                        & \textbf{\ours} & \textbf{.109$\pm$.003}   & \textbf{.204$\pm$.001}   & \textbf{.081$\pm$.001}  & \textbf{.118$\pm$.002}   & \textbf{.196$\pm$.003}   & \textbf{.088$\pm$.000}   \\ \midrule
\multirow{7}{*}{\rotatebox[origin=c]{90}{\tasktwo}} & \textbf{LR}                  & .201$\pm$.000    & .674$\pm$.001    & .160$\pm$.000   & .190$\pm$.000   & .553$\pm$.000    & .151$\pm$.000   \\
                        & \textbf{XGB}              & .100$\pm$.000    & .347$\pm$.000    & .078$\pm$.001   & .134$\pm$.000    & .337$\pm$.000    & .089$\pm$.001   \\
                        & \textbf{MLP}                  & .088$\pm$.003    & .288$\pm$.006    & .066$\pm$.003   & .101$\pm$.002    & .261$\pm$.005    & .075$\pm$.000   \\
                        & \textbf{GCN}                 & .094$\pm$.006    & .294$\pm$.008    & .069$\pm$.004   & .100$\pm$.002    & .257$\pm$.013    & .072$\pm$.003   \\
                        & \textbf{LSTM}                 & .080$\pm$.002    & .249$\pm$.005    & .059$\pm$.002   & .097$\pm$.002    & .235$\pm$.003    & .070$\pm$.002   \\
                        & \textbf{TGLSTM}             & .079$\pm$.001    & .241$\pm$.006    & .058$\pm$.000   & .095$\pm$.001    & .239$\pm$.003    & .070$\pm$.001   \\
                        
                        & \textbf{\ours}  & \textbf{.072$\pm$.001}   & \textbf{.213$\pm$.003}   & \textbf{.053$\pm$.000}  & \textbf{.093$\pm$.000}   & \textbf{.224$\pm$.002}   & \textbf{.066$\pm$.000}   \\ \bottomrule 
\end{tabular}
    \label{tab:acc}
\vspace{-1.8em}
\end{table}

To answer the first research question, we report user engagement prediction accuracy of above methods in Table \ref{tab:acc}. As we can see, \ours achieves best performance in both tasks. As expected, \ours significantly out-performs two feature-based methods LR and XGB since it captures  friendship relation and temporal dynamics. Deep-learning based methods MLP, GCN, and LSTM achieves similar performance. However, \ours surpasses them with tremendous error reduction. Moreover, \ours outperforms state-of-the-art approach TGLSTM, by at most 10\%. There are two potential reasons. First, \ours additionally captures friendship relation by explicitly modeling user-user interaction. Secondly, \tgcn and \tlstm maintain independent parameters to capture exclusive information for every user actions, which enhances the predicting accuracy.

\subsection{Ablation Study}
To answer the second question, we design four variations of \ours as follow:
\textbf{(1)} $\text{\ours}_{ts}$: We first evaluate the contribution of tensor-based design. To this end, we employ the original GCN \cite{kipf2016semi} and LSTM \cite{hochreiter1997long} to create the first ablation $\text{\ours}_{ts}$. We use the last output from LSTM to predict user engagement score.
\textbf{(2)} $\text{\ours}_{fnd}$: We then study the effectiveness of the friendship module. We apply \tlstm on raw features to create $\text{\ours}_{fnd}$.
\textbf{(3)} $\text{\ours}_{tmp}$: Next we study the contribution from the temporal module. $\text{\ours}_{tmp}$ first concatenate outputs from all friendship modules, then apply a fully-connected layer to generate user engagement score.
\textbf{(4)} $\text{\ours}_{int}$: To analyze the contribution of explicitly modeling user interactions, we remove this part to create the last ablation $\text{\ours}_{int}$.  The performance of all variations are reported in Table \ref{tab:abla_acc}. 
$\text{\ours}_{ts}$ performs worse when compared to \ours because it fails to extract exclusive information from each user action. However, it still outperforms TGLSTM, since user interactions enhance the modeling of friendship relation. 
The comparisons among $\text{\ours}_{fnd}$, $\text{\ours}_{tmp}$ and \ours indicate the effectiveness of modeling friendship and temporal dependency for predicting user engagement.
The comparison between $\text{\ours}_{int}$ and \ours highlights the contribution of user interactions, which help \ours filter inactive friends and pinpoint influential users.

\begin{table}[t]
\small
\setlength{\tabcolsep}{0.2pt}
\renewcommand{\arraystretch}{0.9}
\caption{All components help \ours:  Removing (a) \tgcn/\tlstm, (b) friendship module, (c) temporal module or (d) user interactions hurts performance.}
\vspace{-1em}
\begin{tabular}{cccccccc}
\toprule
\multicolumn{2}{c}{\multirow{2}{*}{}}  & \multicolumn{3}{c}{\regionone} & \multicolumn{3}{c}{\regiontwo} \\
\cmidrule(lr){3-5}
\cmidrule(lr){6-8}
\multicolumn{2}{c}{}                          & \textbf{RMSE}     & \textbf{MAPE}     & \textbf{MAE}     & \textbf{RMSE}     & \textbf{MAPE}     & \textbf{MAE}     \\ 
\midrule
\multirow{5}{*}{\rotatebox[origin=c]{90}{\taskone}} & $\textbf{\ours}_{ts}$              &.112$\pm$.002 &	.213$\pm$.004  &	.085$\pm$.001  &	.120$\pm$.000  &	.199$\pm$.001  &	.093$\pm$.000  \\
                        & $\textbf{\ours}_{fnd}$           & .119$\pm$.002  &	.218$\pm$.002  &	.089$\pm$.002  &	.121$\pm$.000  &	.199$\pm$.001  &	.090$\pm$.001 \\
                        & $\textbf{\ours}_{tmp}$              &  .126$\pm$.001  &	.221$\pm$.003  &	.097$\pm$.002  &	.123$\pm$.002  &	.220$\pm$.002  &	.097$\pm$.000   \\
                        & $\textbf{\ours}_{int}$                 & .112$\pm$.001  &	.208$\pm$.001 &	.086$\pm$.002 &	.119$\pm$.002 &	.198$\pm$.002 &	.091$\pm$.000  \\ 
                        & \textbf{\ours} & \textbf{.109$\pm$.003}   & \textbf{.204$\pm$.001}   & \textbf{.081$\pm$.001}  & \textbf{.118$\pm$.002}   & \textbf{.196$\pm$.003}   & \textbf{.088$\pm$.000}   \\ \midrule
\multirow{5}{*}{\rotatebox[origin=c]{90}{\tasktwo}} & 
                            $\textbf{\ours}_{ts}$                   &.078$\pm$.001&	.233$\pm$.004&	.057$\pm$.002&	.095$\pm$.001&	.238$\pm$.003&	.070$\pm$.002   \\
                        & $\textbf{\ours}_{fnd}$  &  .076$\pm$.003&	.228$\pm$.002&	.057$\pm$.002&	.094$\pm$.002&	.231$\pm$.001&	.068$\pm$.000  \\
                        & $\textbf{\ours}_{temp}$                  &.083$\pm$.004	&.240$\pm$.005&	.061$\pm$.002&	.102$\pm$.003&	.253$\pm$.003&	.071$\pm$.001   \\
                        & $\textbf{\ours}_{int}$                  &.075$\pm$.000&	.219$\pm$.001&	.055$\pm$.000&	.094$\pm$.001&	.227$\pm$.002&	.068$\pm$.002    \\ 
                        & \textbf{\ours}  & \textbf{.072$\pm$.001}   & \textbf{.213$\pm$.003}   & \textbf{.053$\pm$.000}  & \textbf{.093$\pm$.000}   & \textbf{.224$\pm$.002}   & \textbf{.066$\pm$.000}   \\ \hline 
\end{tabular}
    \label{tab:abla_acc}
    \vspace{-1.5em}
\end{table}

\subsection{Explainability Evaluation}
To answer the third research question, we first analyze the explanations derived from \ours. Then we compare the results with explanations from baseline methods.

\subsubsection{User Action Importance} \label{exp:act_imp}
We first study the global user action importance $\A^*$. Figure \ref{fig:act_imp} illustrates the importance score of different user actions, where a larger value indicates higher importance for user engagement.

\begin{figure}[t]
        \centering
        \includegraphics[width=0.75\columnwidth]{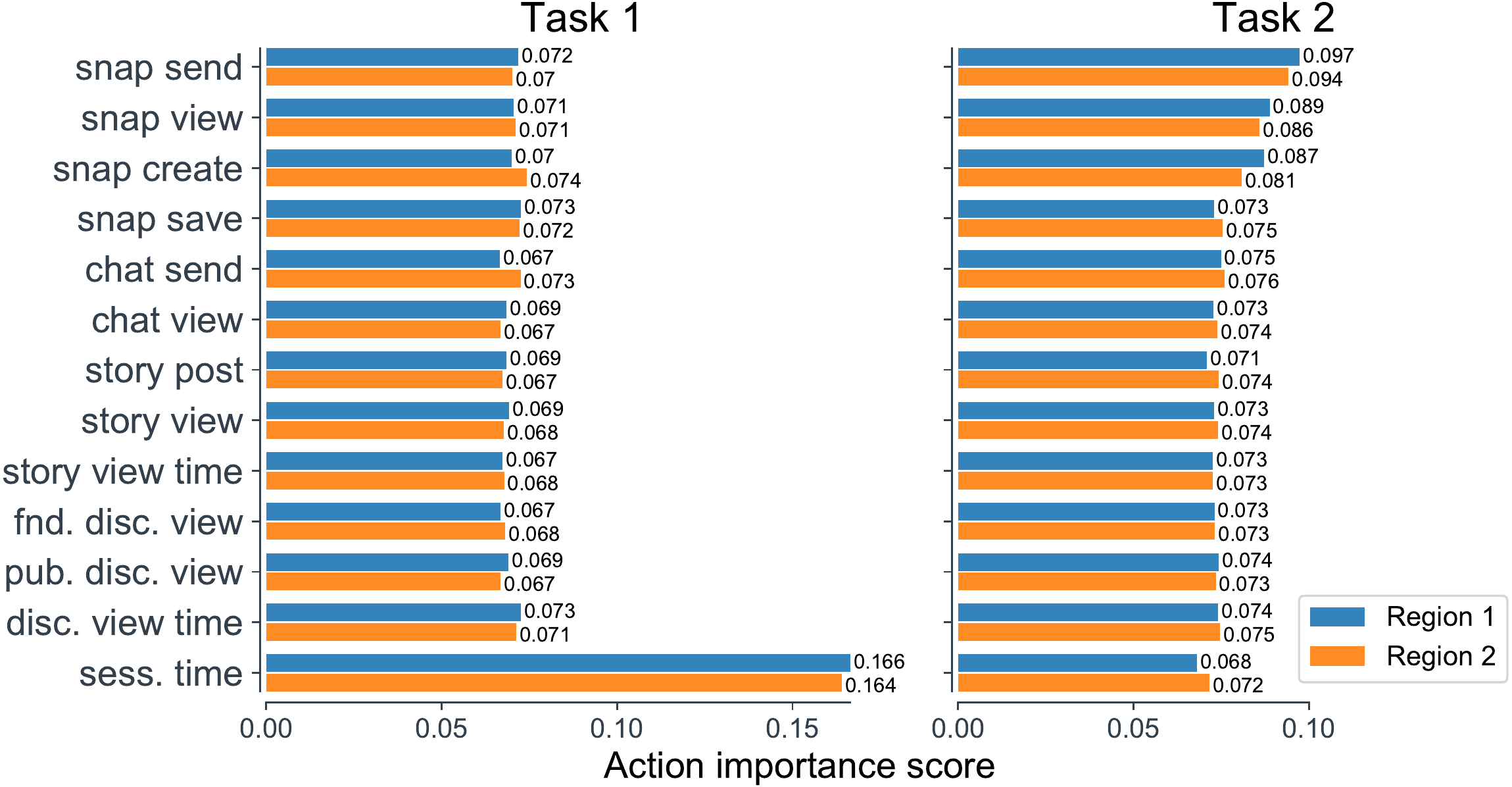}
        \vspace{-1.5em}
        \caption{\ours's global user action importances derived on \taskone and \tasktwo correctly infer past \textit{session time} and \textit{snap}-related actions as the most important for prediction.}
        \vspace{-1.5em}
        \label{fig:act_imp}
\end{figure}

Since the objective of \taskone is to predict a session time-based engagement score, the importance of historical app usage length is significantly higher. This indicates that historical session time is the key factor for user engagement (defined by the expectation of session time in the future), as user activities usually follow strong temporal periodicity.
Remaining user actions play similar roles in extending session time, which is intuitive, because on the entire App level, all the represented in-App functions are heavily consumed.
However, we see that \feature{Snap}-related actions are relatively more important than others. A potential reason is that sending and receiving Snaps (images/videos) are core functions which  distinguish \snap from other  Apps and define product value.

For predicting user engagement defined on normalized \feature{Snap}-related actions in \tasktwo, we see that \feature{SnapSend}, \feature{SnapView}, and \feature{SnapCreate} play the most important role. 
\feature{SnapSend} contributes more to user engagement comparing with \feature{SnapView}, as sending is an active generation activity while viewing is passively receiving information.
Similarly, \feature{SnapCreate} is more important than \feature{SnapSave}, for the reason that creating a Snap is the foundation of many content generation activities, whereas Snap-saving is infrequent. Besides \feature{Snap}-related actions, \feature{ChatSend} is the most important, which makes sense given that private Chat messaging is the next most common usecase after Snaps on \snap, and users often respond to Snaps with Chats and vice-versa.

\begin{figure}[t]
        \centering
        \includegraphics[width=0.7\columnwidth]{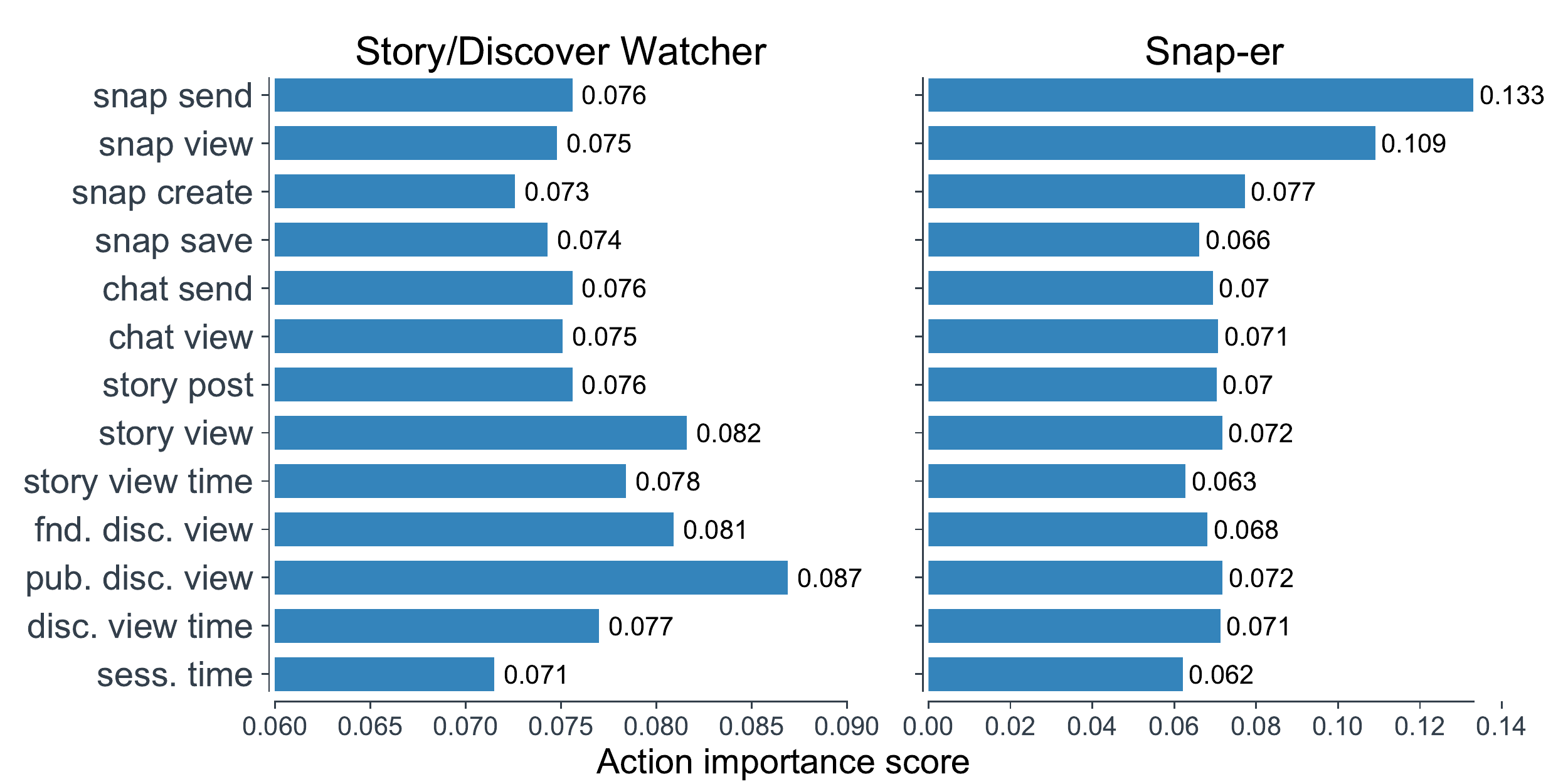}
        \vspace{-1em}
        \caption{Two sample local user action importances: Users with different dominating engagement behaviors exhibit different user action importances.}
        \label{fig:act_imp_ind}
        \vspace{-1.5em}
\end{figure}

Next, we analyze user action importance for individual users. We take \tasktwo as an example, and select two random users from \regionone. To help understand user preference and characteristics, we query an internal labeling service that categorizes users according to their preferences for different \snap features. The service, built on domain knowledge, is as independent from \ours. Generally, a ``Snap-er'' uses \feature{Snap}-related functions more frequently, while a ``Story/Discover Viewer'' is more active on watching friend/publisher Story content on \snap. As illustrated in Figure \ref{fig:act_imp_ind}, the importance scores of \feature{Snap}-related user actions of a Snap-er are significantly higher than that of remained user actions. However, for Story/Discover Viewers, other actions (\feature{StoryView}, \feature{Public-\\DiscoverView}) contribute more. This shows the diversity of action importance for individual users, as the distribution of importance scores changes according to user characteristics.

\subsubsection{Temporal Importance}

\begin{figure}[t]
        \centering
        \begin{subfigure}[b]{.4\textwidth}
            \centering
            \includegraphics[width=\columnwidth]{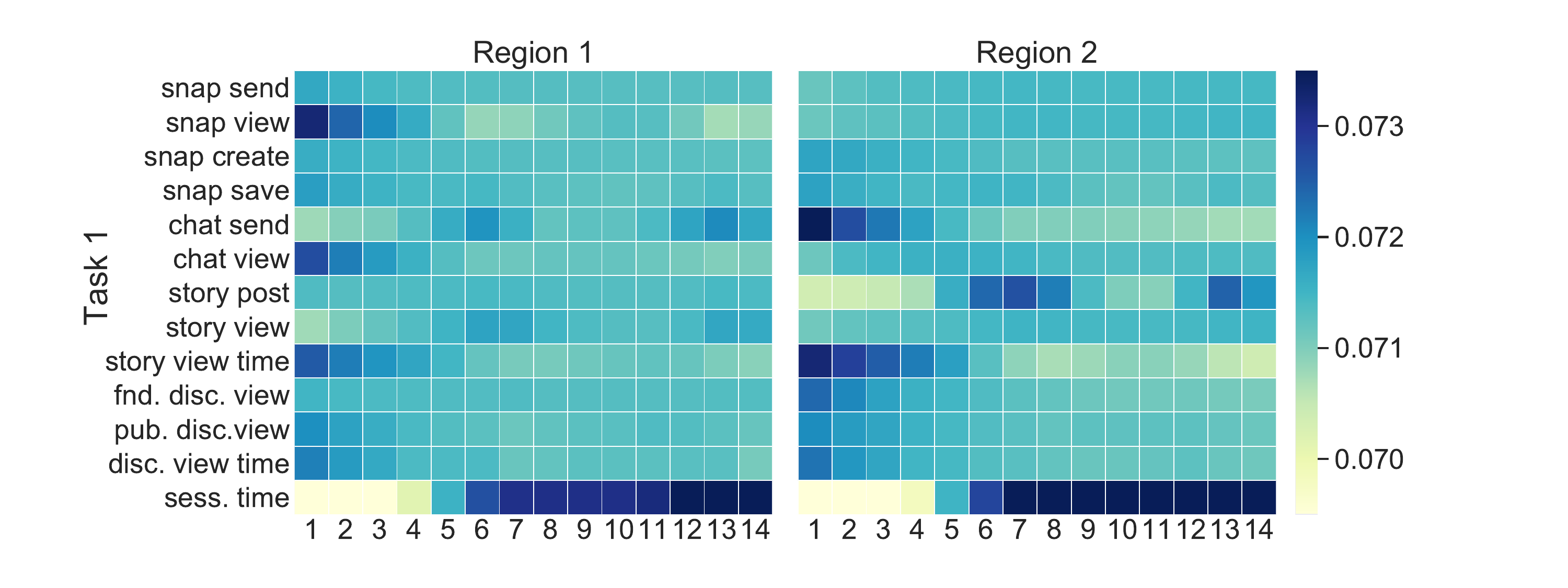}
        \end{subfigure}
        \begin{subfigure}[b]{.4\textwidth}  
            \centering 
            \includegraphics[width=\columnwidth]{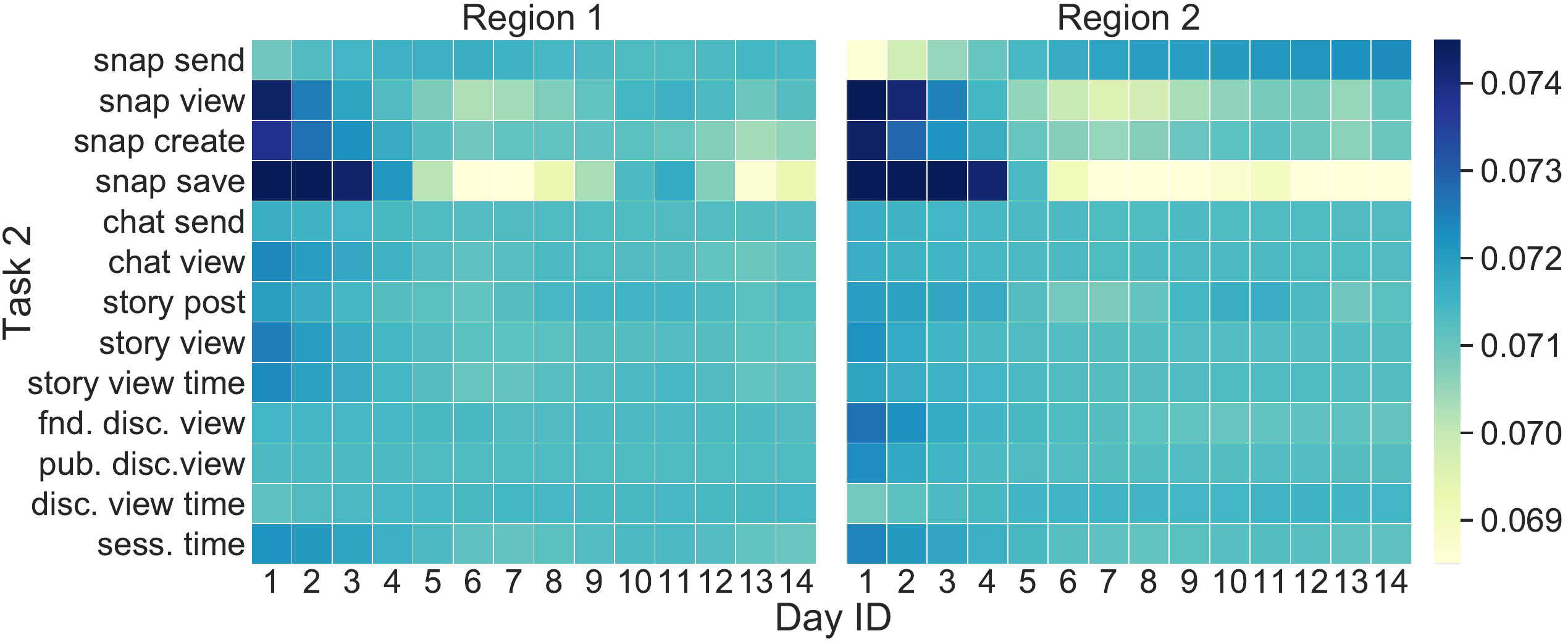}
        \end{subfigure}
        \vspace{-1em}
        \caption{\ours's global temporal importances show long and short-term action importances over time.}
        \label{fig:tmp_imp}
        \vspace{-1.5em}
\end{figure}

Figure \ref{fig:tmp_imp} displays the overall temporal importance of user actions across time (i.e., past 14 days). Darker hue indicates higher importance to user engagement.
For \taskone, \feature{SessionTime} has strong short-term importance in both cities. Temporally close \feature{SessionTime} (later days) data contributes to user engagement more.  On the contrary, other user actions show long-term importance. For example, \feature{SnapView} and \feature{ChatView} show relatively higher importance on the first day. In addition to long/short-term characteristics, we see the importance of most user actions showing strong periodicity in a weekly manner. 
Similar conclusions can also be drawn from \tasktwo, where \feature{SnapView}, \feature{SnapCreate}, and \feature{SnapSave} show longer-term correlation to user engagement. \feature{SnapSend} on the other hand demonstrates a short-term correlation. The periodicity of temporal importance is also relatively weaker compared to \taskone.


We then study the temporal importance for individual users. Similar to action importance, we randomly select two users from \regionone, and plot temporal importance scores when predicting user engagement score in \taskone. As shown in Figure \ref{fig:tmp_imp_ind}, users with different dominant behaviors exhibit different temporal importance score distributions.  The temporal importance scores of \feature{Publisher-\\DiscoverView} and \feature{DiscoverViewTime} are relatively higher for the Story/Discover Watcher, with clear periodicity effects (importance in day 1-2, and then in 7-8, and again in 13-14, which are likely weekends when the user has more time to watch content). The Chatter has higher score for \feature{Chat}-related features, with more weight on recent \feature{ChatView}s (days 12-14).  Our results suggest that explanations learned by \ours coincide with past understanding of temporal influence in these behaviors.

\begin{figure}[t]
    \centering
    \includegraphics[width=.4\textwidth]{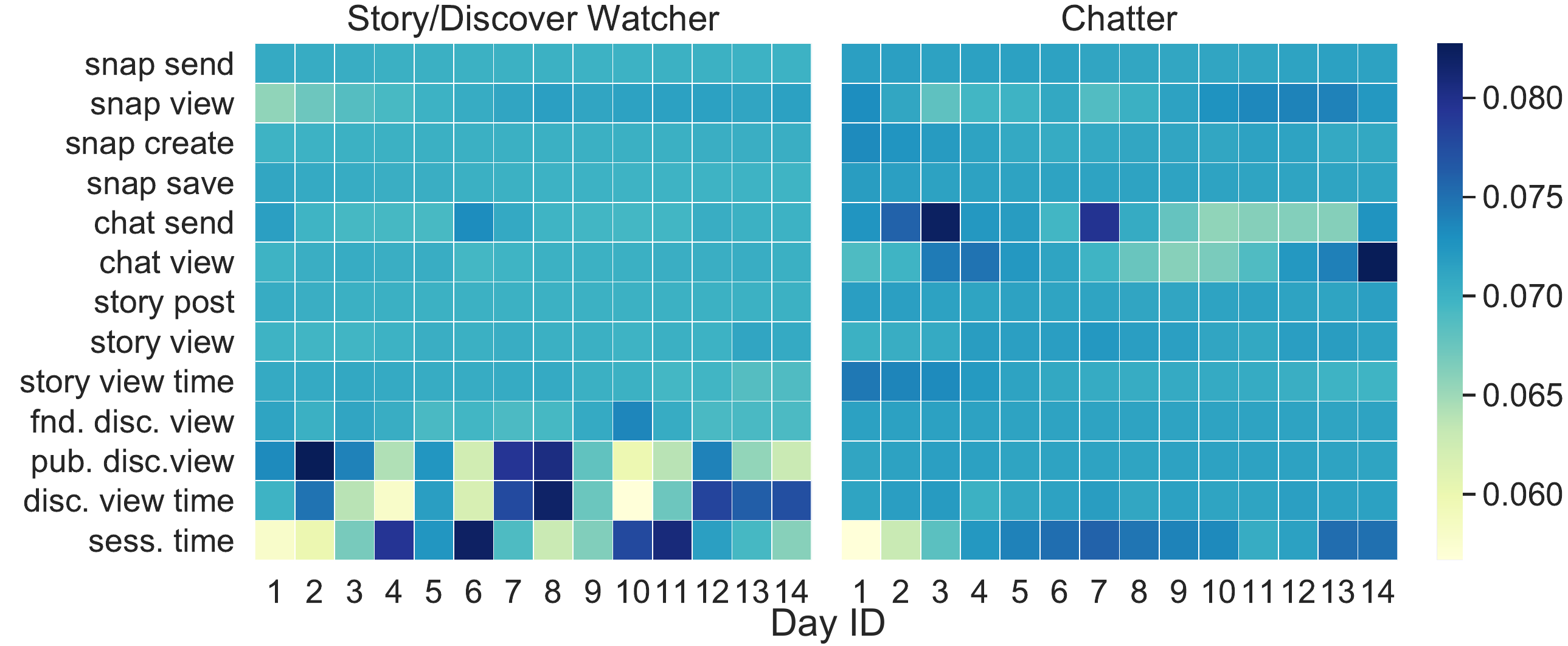}
    \vspace{-1.2em}
    \caption{\ours can capture diverse local level temporal importance for users of different persona.}
    \label{fig:tmp_imp_ind}
    \vspace{-2em}
\end{figure}

\subsubsection{Friendship Importance}
We next validate the learned (local) friendship importance. Figure \ref{fig:fnd_imp} demonstrates two example users selected from \regionone, for \taskone. The heatmaps illustrate the importance scores of their friends.  Clearly, friendship importance scores are not uniformly distributed among all friends. Some friends hold higher importance scores to the selected user, while others have relatively lower scores. This is potentially due to low similarity in user activities, or two friends being independently active (but not jointly interactive).  To verify this assumption and interpret friendship importance scores, we compare user activeness (session time) of the selected user with their most important friends and their least importance friends (measured by the sum of scores over 14 days). As Figure \ref{fig:fnd_imp} shows, the both users follow a pattern similar to their most important friends and unlike the least important ones.  Moreover, temporal importance (darker hue) of the highest-importance friend coincides in the temporal heatmaps (left) and the session time activity plots (right) for both users in (a) and (b).  

\begin{figure*}[t]
       \centering
        \begin{subfigure}[b]{\columnwidth}
            \centering
            \includegraphics[width=0.45\columnwidth]{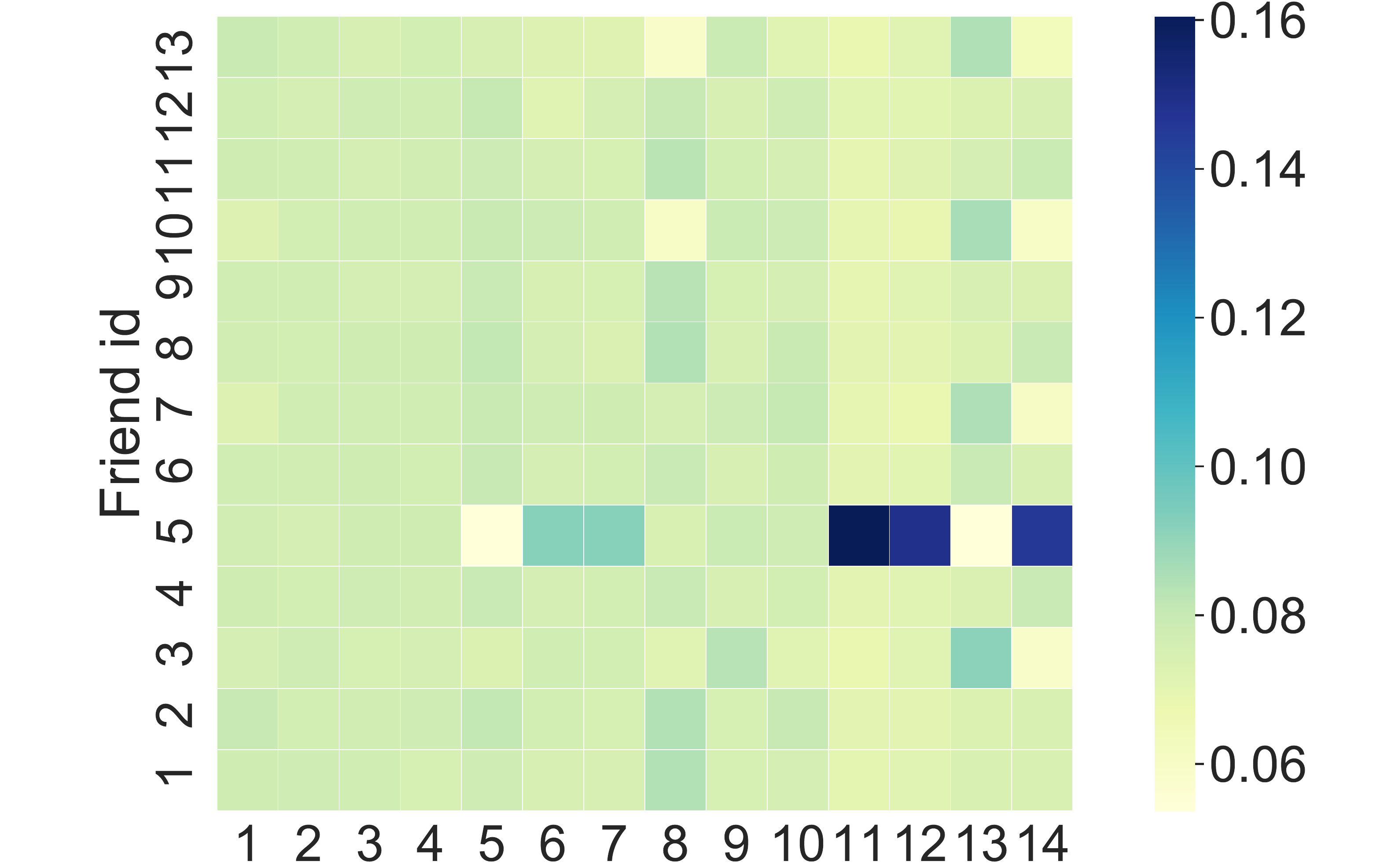}
            \includegraphics[width=0.45\columnwidth]{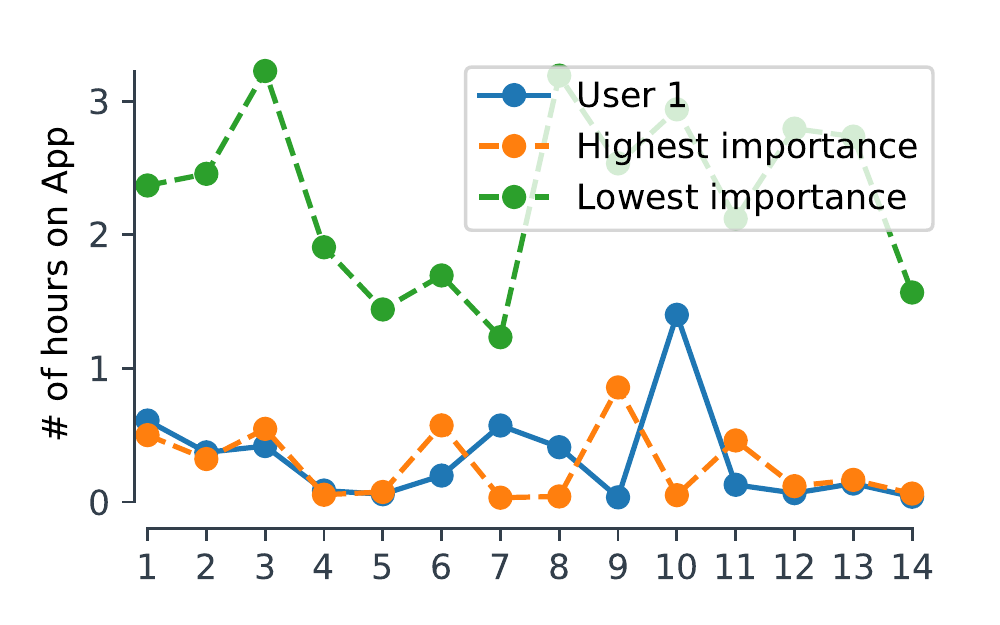}
            \vspace{-0.7em}
            \caption{User 1}
        \end{subfigure}
        \begin{subfigure}[b]{\columnwidth}
            \centering 
            \includegraphics[width=0.45\columnwidth]{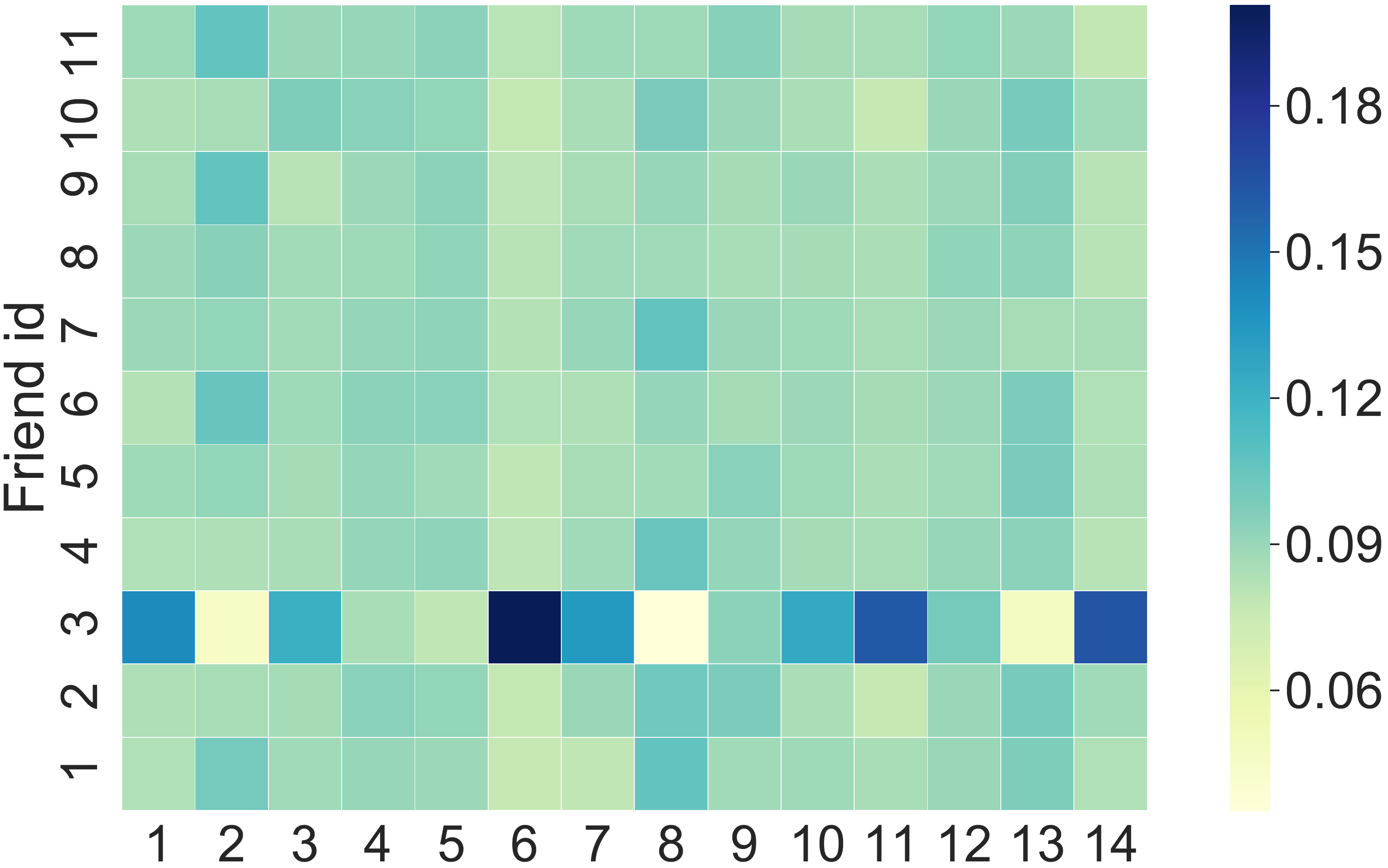}
            \includegraphics[width=0.45\columnwidth]{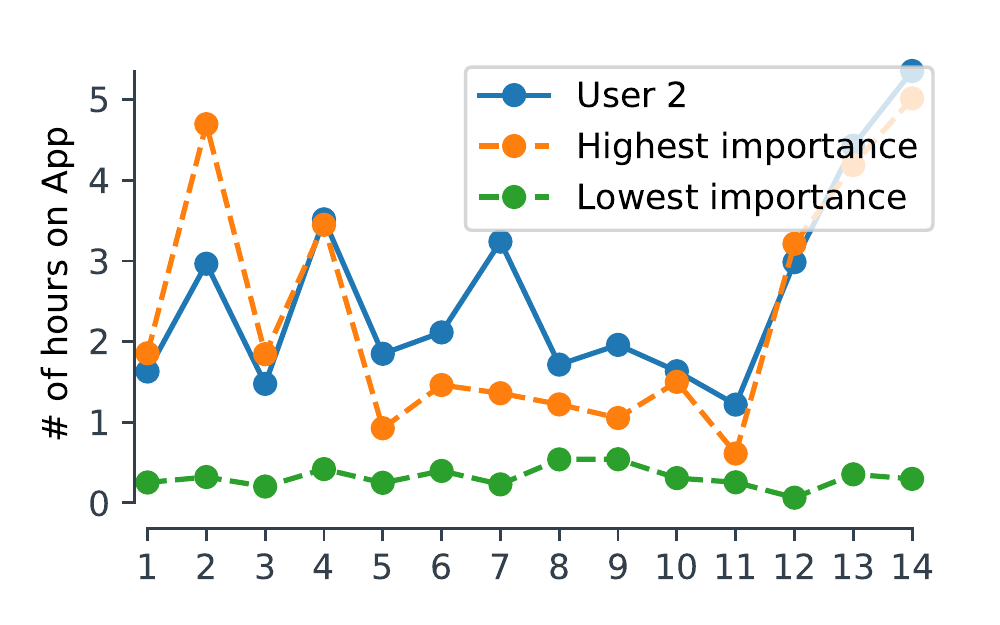}
            \vspace{-0.7em}
            \caption{User 2}
        \end{subfigure}
        \vspace{-1.2em}
        \caption{\ours's local friendship importance captures asymmetric influence of friends: the user has similar session time behaviors (right) as their highest-importance friends (blue and orange lines are close); session time spikes coincide with high temporal importances (left) of those friends (dark hues).}
        \label{fig:fnd_imp}
        \vspace{-1.2em}
\end{figure*}

\begin{figure}[t]
    \centering
    \begin{subfigure}[b]{0.5\columnwidth}
    \includegraphics[width=\columnwidth]{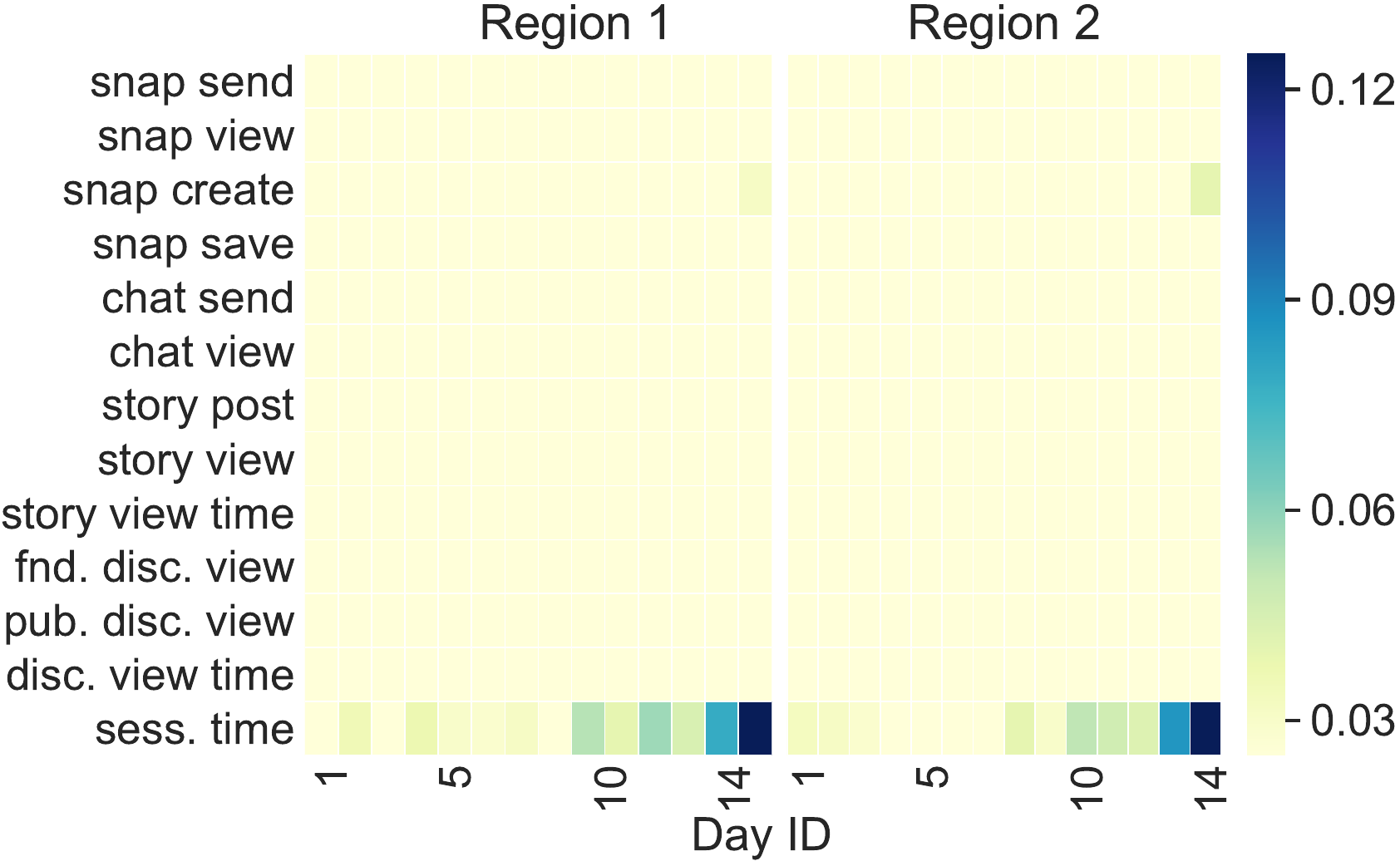}
    \vspace{-2em}
    \caption{XGBoost}
    \end{subfigure}
    \begin{subfigure}[b]{0.3\columnwidth}
    \includegraphics[width=\columnwidth]{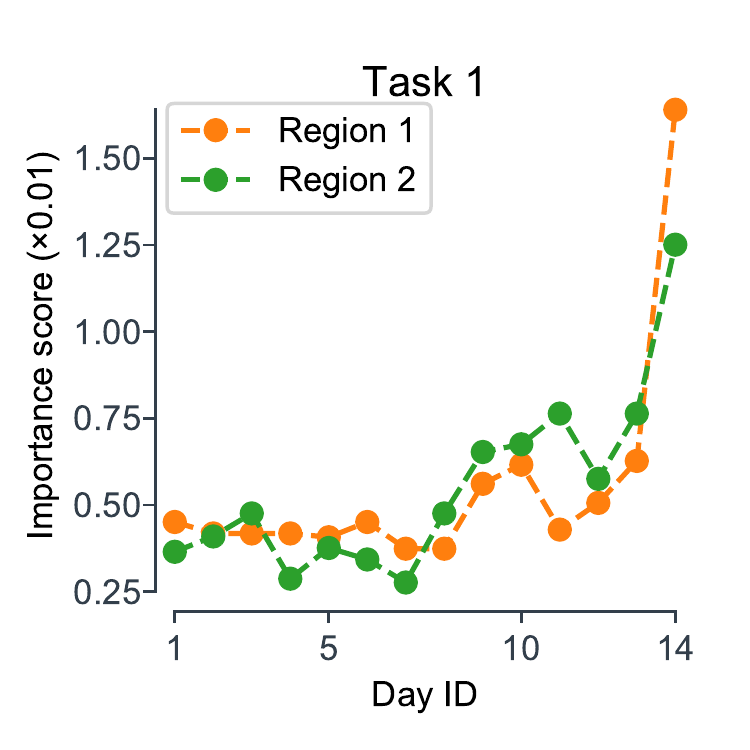}
    \vspace{-2em}
    \caption{LSTM}
    \end{subfigure}
    \vspace{-1.2em}
    \caption{Comparisons of explainability.}
    \label{fig:tmp_imp_xgb}
    \label{fig:tmp_imp_lstm}
    \vspace{-1.8em}
\end{figure}

\subsubsection{Baseline comparisons on explainability}
Feature importance from XGBoost can be used as a temporal importance explanation. As in Figure \ref{fig:tmp_imp_xgb}, results from XGBoost are very sparse, where most user actions receive an unnatural, near-0 importance score, likely because feature importance is only a byproduct of the training of XGBoost. Unlike \ours, the XGBoost objective is purely defined on prediction accuracy, failing to learn  explanations for user actions over time.
Figure \ref{fig:tmp_imp_lstm} shows the temporal attention from LSTM. There are two weakness of using LSTM for explanation: (1) it is unable to capture the importance of each user action; (2) compared to \ours, the temporal attention fails to capture periodicity of user actions, which na\"{i}ve LSTM mixes and cannot separate.  Comparatively, \ours derives richer and more fine-grained explanations.


\subsection{Practical Applications}

Our framework is designed with practical applications in mind.  State-of-the-art in engagement prediction improves temporally-aware estimation of overall demand and key metrics, which offers flexible use in many forecasting and expectation-setting applications.  Explainability in the model helps quantify both global and local factors in user engagement, and how they motivate users to engage with the platform.  Moreover, it paves roads for personalized interventions and nudges to users to realize in-App value, stay in touch with their best friends and retain.  Finally, our choices around tensor-based modeling improve efficiency by reducing parameters and decreasing training time.  Yet, GNN training/inference is still a challenge for multi-million/billion-scale workloads, especially considering dynamism of the underlying data,  temporality of predictions, and frequent model updation needs in practice, though new work in GNN scalability offers some promising inroads \cite{ying2018graph, chiang2019cluster}.  In the future, we plan to develop automated and recurrent training and inference workflows which can handle these issues to gracefully scale \ours to production workloads larger than those we experimented on.


\section{Conclusion}
In this paper, we explore the problem of explainable user engagement prediction for social network Apps.  Given different notions of user engagement, we define it generally as the future expectation of a metric of interest.  We then propose an end-to-end neural framework, \ours, which models friendship, user actions and temporal dynamics, to generate accurate predictions while jointly deriving local and global explanations for these key factors. Extensive experiments on two datasets and two engagement prediction tasks from \snap  demonstrate the efficiency, generality and accuracy of our approach: \ours improves accuracy compared to state-of-the-art methods by ${\approx}10\%$ while reducing runtime by ${\approx}20\%$ owing to its use of proposed tensor-based GCN and LSTM components.  We hope to continue to improve scaling aspects of \ours to deploy it for recurrent auto-training and inference at \snap.  While \ours is designed with \snap in mind, our core ideas of engagement definition, contributing factors, and technical contribution in neural architecture design offer clear applications to other social Apps and online platforms.  

\section*{Acknowledgement}
This material is based upon work supported by, or in part by, the National Science Foundation (NSF) under grant \#1909702. Any opinions, findings, and conclusions in this material are those of the authors and do not reflect the views of the NSF.

\bibliographystyle{ACM-Reference-Format}
\bibliography{ref}
\clearpage
\appendix
\section{Complexity of \ours}
\subsection{Theoretical Analysis} \label{apd:complexity} 
In this section, we analyze the complexity of \ours. In particular, we focus on the complexity reduction from the tensor-based designs of GCN and LSTM over the standard ones. 
Without loss of generality, we use $d_{\text{in}}$ and $d_{\text{out}}$ to denote the dimensions of input and output of a neural network layer (e.g., GCN, LSTM, etc.).
We use the number of learnable parameters (neurons in the network) to measure the network complexity as follows:
\begin{theorem} \label{theo:space}
By replacing the standard GCN and LSTM layers with corresponding tensor-based versions, the network complexity is reduced by $(1-\frac{1}{K})d_{\text{in}} \cdot d_{\text{out}}$ and $4(1-\frac{1}{K})(d_{\text{in}} + d_{\text{out}})d_{\text{out}}$ number of trainable parameters, respectively.
\end{theorem}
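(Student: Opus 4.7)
}
The plan is a direct parameter count for each of the two architectures (GCN and LSTM), comparing the standard dense formulation with its tensor-based counterpart under the natural convention that the total input/output dimensions are preserved. Throughout, I will assume the $K$ feature categories are balanced so that each input slice has dimension $d_{\text{in}}/K$ and each output slice has dimension $d_{\text{out}}/K$; this is the setting implicit in the block-diagonal parameterization of Eqn.~\ref{eqn:tgcn} and the \tlstm update in Eqn.~\ref{eqn:lstm}.

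First I would handle the GCN case. A standard GCN layer, as in Eqn.~\ref{eqn:gcn}, carries a single dense weight $\W \in \R^{d_{\text{in}} \times d_{\text{out}}}$, giving exactly $d_{\text{in}}\,d_{\text{out}}$ trainable parameters (biases are usually omitted, but in any case contribute only an $O(d_{\text{out}})$ lower-order term). The tensor-based layer instead uses $\mathcal{W}=\{\W_1,\dots,\W_K\}$ with $\W_k \in \R^{(d_{\text{in}}/K)\times(d_{\text{out}}/K)}$, for a total of $K \cdot (d_{\text{in}}/K)\cdot(d_{\text{out}}/K) = d_{\text{in}}\,d_{\text{out}}/K$ parameters. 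Subtracting gives the claimed savings of $(1-1/K)\,d_{\text{in}}\,d_{\text{out}}$.

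Next, I would do the same accounting for the LSTM case, using that each of the four gates ($\f,\mathbf{i},\mathbf{o},\mathbf{c}$) in Eqn.~\ref{eqn:lstm} contributes one input-to-hidden matrix in $\R^{d_{\text{in}}\times d_{\text{out}}}$ and one hidden-to-hidden matrix in $\R^{d_{\text{out}}\times d_{\text{out}}}$, so a standard LSTM layer has $4(d_{\text{in}}+d_{\text{out}})\,d_{\text{out}}$ matrix parameters. The tensor version replaces each of the eight matrices by its block-diagonal analogue with $K$ blocks, shrinking each matrix's parameter count by a factor of $K$ exactly as in the GCN argument. Summing over the four gates yields a reduction of $4(1-1/K)(d_{\text{in}}+d_{\text{out}})\,d_{\text{out}}$.

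The only non-routine step is justifying that the tensor-based layer can be written as a block-diagonal restriction of the standard one, which is where the balanced-dimension convention enters. I would make this explicit by noting that $\x \otimes \mathcal{W}$ in Eqn.~\ref{eqn:tgcn} is precisely the matrix product with a block-diagonal $\tilde{\W} = \mathrm{diag}(\W_1,\dots,\W_K) \in \R^{d_{\text{in}}\times d_{\text{out}}}$, whose number of free entries is $d_{\text{in}}\,d_{\text{out}}/K$ rather than $d_{\text{in}}\,d_{\text{out}}$; the same block-diagonal view applies to every $\mathcal{U}_*,\mathcal{U}_*^\h$ in the \tlstm. Bias vectors and the element-wise gate combinations contribute only $O(d_{\text{out}})$ terms that I would either drop or absorb; they do not affect the leading-order savings stated in the theorem. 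The computational-complexity part of Theorem~\ref{theo:efficient} then follows immediately, since the per-layer matrix-vector multiplications inherit the same factor-$K$ reduction in FLOPs.
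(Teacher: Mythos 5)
Your proposal is correct and follows essentially the same route as the paper's proof: a direct parameter count under the balanced-partition assumption ($d_{\text{in}}/K$ and $d_{\text{out}}/K$ per category), giving $d_{\text{in}}d_{\text{out}}/K$ versus $d_{\text{in}}d_{\text{out}}$ for the GCN and the analogous factor-$K$ shrinkage of the eight gate matrices (with biases cancelling) for the LSTM. The block-diagonal reformulation you add is a nice way to make the convention explicit, but it does not change the argument.
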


\begin{proof}
The number of trainable parameters for the GCN layer is $d_{\text{in}} \cdot d_{\text{out}}$ (see Eqn. \ref{eqn:gcn}), and that for the tensor-based GCN layer is $K\cdot(\frac{d_{\text{in}}}{K} \cdot \frac{d_{\text{out}}}{K}) = \frac{d_{\text{in}} \cdot d_{\text{out}}}{K}$ (see Eqn. \ref{eqn:tgcn}, assume they are equally divided into each category of user action features). Therefore, tensor-based GCN reduces network complexity by $(1-\frac{1}{K})d_{\text{in}} \cdot d_{\text{out}}$ number of parameters.
Similarly, the standard LSTM layer has $4(d_{\text{in}} \cdot d_{\text{out}} + d_{\text{out}}^2 + d_{\text{out}})$ trainable parameters (corresponding to the input transition, hidden state transition, and the bias); while the tensor-based LSTM layer only maintains $4(\frac{d_{\text{in}} \cdot d_{\text{out}}}{K} + \frac{d_{\text{out}}^2}{K} + d_{\text{out}})$ number of parameters (for $\mathcal{U}_*$, $\mathcal{U}_*^\h$ and $\b_*$ in Eqn. \ref{eqn:lstm}). As a result, the total number of parameters is reduced by $4(1-\frac{1}{K})(d_{\text{in}} + d_{\text{out}})d_{\text{out}}$ when adopting the tensor-based LSTM over the standard one.
\end{proof}

The computational complexity comes from multiplications. 
The reduction of computational complexity is analyzed through Theorem \ref{theo:time}:
\begin{theorem} \label{theo:time}
The tensor-based GCN and the tensor-based LSTM reduce the computational complexity by $\mathcal{O}(d_{\text{in}} \cdot d_{\text{out}})$ and $\mathcal{O}((d_{\text{in}} + d_{\text{out}})d_{\text{out}})$, respectively.
\end{theorem}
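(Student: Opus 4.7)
The plan is to mirror the counting argument used for Theorem \ref{theo:space} but to track floating-point multiplications instead of trainable parameters, relying on the same equal partition of $d_{\text{in}}$ and $d_{\text{out}}$ into $K$ blocks that is assumed throughout the paper. The guiding observation is that for the dense linear pieces of a GCN or LSTM layer, the number of scalar multiplications is proportional (up to a constant) to the number of weights, so the per-layer time savings will follow the same factor of $(1-1/K)$ as the space savings, with the hidden constants absorbed in the $\mathcal{O}$-notation.

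First, for GCN I would isolate the dominant cost in Eqn. \ref{eqn:gcn}. The elementwise activation $\sigma(\cdot)$ and the neighbor aggregation $\sum_{v \in \N(v)}$ are both $\mathcal{O}(d_{\text{out}})$ per neighbor and do not depend on the factorization used, so they are identical for the standard and tensor-based layers. The remaining bottleneck is the matrix product $\x^v \W$, which costs $\Theta(d_{\text{in}} d_{\text{out}})$ multiplications. Repeating the count for the tensor-based version in Eqn. \ref{eqn:tgcn}, each of the $K$ blocks $\x^v_k \W_k$ contributes a $(d_{\text{in}}/K) \times (d_{\text{out}}/K)$ product, for a total of $K \cdot (d_{\text{in}}/K)(d_{\text{out}}/K) = d_{\text{in}} d_{\text{out}}/K$. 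Subtracting gives a saving of $(1 - 1/K) d_{\text{in}} d_{\text{out}} = \mathcal{O}(d_{\text{in}} d_{\text{out}})$.

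Next, for the LSTM cell of Eqn. \ref{eqn:lstm} I would account for the four gates $\f_t, \mathbf{i}_t, \mathbf{o}_t$, and the cell-candidate, each of which performs one input-to-hidden product of size $d_{\text{in}} \times d_{\text{out}}$ and one hidden-to-hidden product of size $d_{\text{out}} \times d_{\text{out}}$. The Hadamard products, bias additions, and $\tanh$/$\sigma$ nonlinearities are $\mathcal{O}(d_{\text{out}})$ and therefore negligible relative to the matrix products. This gives a standard cost of $4(d_{\text{in}} d_{\text{out}} + d_{\text{out}}^2) = 4(d_{\text{in}} + d_{\text{out}}) d_{\text{out}}$ per step. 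Applying the same block decomposition to $\mathcal{U}_*$ and $\mathcal{U}_*^{\h}$ as in the proof of Theorem \ref{theo:space} yields a tensor-based cost of $4(d_{\text{in}} + d_{\text{out}}) d_{\text{out}} / K$, for a saving of $4(1 - 1/K)(d_{\text{in}} + d_{\text{out}}) d_{\text{out}} = \mathcal{O}((d_{\text{in}} + d_{\text{out}}) d_{\text{out}})$.

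The main subtlety I expect will be justifying that the auxiliary terms (activation, graph aggregation over neighbors, elementwise gate operations) really are lower order and can be discarded under $\mathcal{O}(\cdot)$. I would handle this cleanly by noting that these terms are structurally identical between the standard and tensor-based versions, so they cancel when taking the difference; the only multiplicative cost that differs is the block-diagonal matrix product, which is exactly the quantity bounded above. A brief remark that the $K$ block products can be dispatched in parallel on modern hardware would round out the argument, but is not required for the stated $\mathcal{O}$-bound.
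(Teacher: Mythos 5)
Your proposal is correct and follows essentially the same route as the paper's proof: count the scalar multiplications in the dominant dense products of Eqns.~\ref{eqn:gcn}, \ref{eqn:tgcn} and \ref{eqn:lstm}, apply the equal $K$-block partition of $d_{\text{in}}$ and $d_{\text{out}}$, and observe that the aggregation/activation/gate terms are identical in both versions and cancel in the difference. The paper merely makes the neighbor-aggregation cost explicit via a node count $N$ before discarding it, which is the same cancellation you argue for.
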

\begin{proof}
Let $N$ denote the number of nodes in the ego-network.
Using Eqn. \ref{eqn:gcn} and \ref{eqn:tgcn}, the computational complexity of a GCN layer and a tensor-based GCN layer are $N^2 \cdot d_{\text{in}} + N \cdot d_{\text{in}} \cdot d_{\text{out}}$ and $N^2 \cdot \frac{d_{\text{in}}}{K} \cdot K + N  \cdot \frac{d_{\text{in}}}{K}  \cdot \frac{d_{\text{out}}}{K} \cdot K = N^2 \cdot d_{\text{in}} + N  \cdot \frac{d_{\text{in}} \cdot d_{\text{out}}}{K}$, respectively. The reduction is then $N(1-\frac{1}{K})d_{\text{in}} \cdot d_{\text{out}} = \mathcal{O}(d_{\text{in}} \cdot d_{\text{out}})$.
For an LSTM layer (Eqn. \ref{eqn:lstm}), it takes $4(d_{\text{in}} \cdot d_{\text{out}} + d_{\text{out}}^2) + 3d_{\text{out}}$ multiplications to update its hidden and gate, while the tensor-based LSTM layer takes only $4(\frac{d_{\text{in}}}{K}  \cdot \frac{d_{\text{out}}}{K} \cdot K + \frac{d_{\text{in}}^2}{K^2} \cdot K) + 3d_{\text{out}} = 4(\frac{d_{\text{in}} \cdot d_{\text{out}}}{K} + \frac{d_{\text{out}}^2}{K}) + 3d_{\text{out}}$ multiplications. Thus, the reduction of computational complexity by the tensor-based LSTM is $\mathcal{O}((d_{\text{in}} + d_{\text{out}})d_{\text{out}})$.
\end{proof}

Note that for \ours, it adopts multiple friendship modules with the tensor-based LSTM. Therefore, \ours is significantly benefited from the tensor-based design, reducing both network size and computational complexity sharply. However, the overall improvement over complexity does not exactly aligned with these tensor-based designs, due to costs from extra components in \ours such as the computation of attention scores. Therefore, we also analyze the real-world running time of \ours quantitatively in the following experiment section.

\subsection{Experimental Results}
\label{apd:runtime_experiments}
We study the runtime complexity of \ours. We compare the runtime of $\text{\ours}_{ts}$ and \ours, to demonstrate the improvement by using tensor-based designs for \ours over a non tensor-based model $\text{\ours}_{ts}$
Both training and testing (inference) run times are reported in \ref{tab:runtime}.
We can see that training \ours takes significantly less time then $\text{\ours}_{ts}$ by an average of 20\%.
In addition, inference speed of \ours is also faster. Therefore, it is beneficial to adopt tensor-based designs when constructing the framework. Note that our implementation uses PyTorch Geometric\footnote{https://github.com/rusty1s/pytorch\_geometric} as the underlying message passing framework. 
\begin{table}[ht]
    \centering
    \caption{Comparisons of Runtime ($min$). \ours reduced 20\% of runtime on average comparing with non-tensor-based $\text{\ours}_{ts}$.
    }
    \vskip -1em
\begin{tabular}{cccccc}
\toprule
\multicolumn{2}{c}{\multirow{2}{*}{}}  & \multicolumn{2}{c}{ \regionone} & \multicolumn{2}{c}{\regiontwo} \\ 
\cmidrule(lr){3-4}
\cmidrule(lr){5-6}
\multicolumn{2}{c}{}  & \textbf{Train} & \textbf{Test} & \textbf{Train} & \textbf{Test} \\ \midrule
\multirow{2}{*}{\rotatebox[origin=c]{0}{\taskone}} & $\textbf{\ours}_{ts}$    &  216.96   &  148.20 &  132.25  & 90.42 \\
                        & \textbf{\ours}                  &  \textbf{181.35}  & \textbf{119.40}  & \textbf{117.70}  & \textbf{73.43} \\ \midrule
\multirow{2}{*}{\rotatebox[origin=c]{0}{\tasktwo}}  & $\textbf{\ours}_{ts}$     &207.23   & 151.48 &  137.50 & 89.61 \\
                        & \textbf{\ours}                  &   \textbf{172.00}  & \textbf{115.10} & \textbf{110.92}  & \textbf{69.05}  \\  \bottomrule 
\end{tabular}
    \label{tab:runtime}
\end{table}

\section{Implementation Details} \label{imp_details}

\subsection{Experimental Environment}
Our experiments are conducted on a single machine on Google Cloud Platform\footnote{https://cloud.google.com}, with a 16-core CPU, 60GB memory and 2 Nvidia P100 GPUs.

\subsection{Data Preprocessing}
\begin{table}[b]
\setlength\tabcolsep{4pt}
\small
    \centering
    \caption{Statistics of Datasets.}
    \vskip -1em
    \begin{tabular}{cccccc} 
    \toprule

         &  \regionone & \regiontwo \\ \midrule
     \textbf{Time period}    & \multicolumn{2}{c}{09/16/2019 - 10/27/2019} \\
     \textbf{Avg. \# users} & 153006 & 108452 \\
     \textbf{Avg. node degree} & 51.58 & 36.95 \\
     \textbf{\# node features} & \multicolumn{2}{c}{13} \\
     \textbf{\# edge features} & \multicolumn{2}{c}{3} \\ \bottomrule
    \end{tabular}
    \label{tab:data}
\end{table}

\begin{table*}[!h]
    \centering
    \caption{Selected features for user actions on \snap.}
    \vspace{-1em}
    \begin{tabular}{ccl}
    \toprule
In-App function                            & Feature name    & \multicolumn{1}{c}{Description}      \\ \midrule
\multirow{4}{*}{Snap}                      & \feature{SnapSend}       & \# of snaps sent to friends.                            \\ 
                                           & \feature{SnapView}        & \# of snaps viewed from friends.                        \\
                                           & \feature{SnapCreate}      & \# of snaps created by the user.                        \\
                                           & \feature{SnapSave}        & \# of snaps saved to the memory/smartphone.             \\  \midrule
\multirow{2}{*}{Chat}                      & \feature{ChatSend}        & \# of text messages sent to friends.                    \\
                                           & \feature{ChatView}        & \# of received text messages.                           \\  \midrule
\multicolumn{1}{c}{\multirow{3}{*}{Story}} & \feature{StoryPost}       & \# of videos posted to the user's page                  \\
                       & \feature{StoryView}       & \# of watched story videos posted by others.            \\
                     & \feature{StoryViewTime}  & Total time spent for watching stories.                  \\  \midrule
\multirow{3}{*}{Discover}                  & \feature{FriendDiscoverView}  & \# of watched videos posted by friends on Discover page \\
                                           & \feature{PublisherDiscoverView} & \# of watched videos posted by publisher on Discover page  \\
                                           & \feature{DiscoverViewTime}  & Total time spent for watching videos on Discover page.  \\ \midrule
Misc.                                          & \feature{SessionTime}       & Total time spent on Snapchat.  \\ \bottomrule  
\end{tabular}
    \label{tab:action}
\end{table*}

We select two geographic regions, one from North America and the other from Europe, to compile two datasets.
We set the time period from 09/16/2019 to 10/27/2019, with a one-day time interval length. There are totally 42 days in the time period (6 weeks).
For each dataset, we first query all users whose locations are within the corresponding region. 
Users who spend less than one minute (session time) on a daily average are treated as extremely inactive and filtered.
We then obtain the friendship of these users as our social network and historical user action records in each day.
Detailed features and descriptions for user actions are reported in Table \ref{tab:action}.
Besides, we also acquire user-user commutation as features for user interaction, including chat, snap, and story. These features are constructed from the aggregation of each type of interaction.
Table \ref{tab:data} details both datasets.

\subsection{Model Implementations}
We implement all compared baseline methods in Python 3.7. 
Linear Regression is adopted from scikit-learn\footnote{https://scikit-learn.org}.
We use XGBoost\cite{chen2016xgboost} from the official package\footnote{https://xgboost.readthedocs.io/} with its recommended setting and parameters.
We implement the GCN model with PyTorch Geometric. We set up a two layer GCN, with the hidden size of 128, using ELU as the activation function.
Similarly, we build the LSTM model as a two-layer LSTM using PyTorch\footnote{https://pytorch.org/}. The hidden size is 128. We set the dropout rate to 0.5 for the second layer. ELU is used as the activation.
We following the original settings for TGLSTM as introduced in the paper \cite{liu2019characterizing}.
We implement \ours with PyTorch and PyTorch Geometric. 
Friendship modules contain two-layer \tgcn. The dimension of output embedding for all feature categories is set to 32. 
The design of \tlstm is inspired by IMV-LSTM\footnote{https://github.com/KurochkinAlexey/IMV\_LSTM}. We use two layers of \tlstm for \ours.
Our code is available on  \textbf{Github}\footnote{https://github.com/tangxianfeng/FATE}.

For LR and Xgboost, we train until convergence. 
For neural network models, we set the batch size to 256 and the max number of epoch to 10.
All models are optimized by Adam algorithm \cite{kingma2014adam}, with a learning rate of 0.001.
They are trained until reaching the max epoch or early-stopped on the validation set. The validation set contains 10\% samples randomly selected from the training set.
All methods are trained and tested 10 times to get averaged results.

\subsection{Evaluation Metrics} \label{evaluation_metrics}
Three common metrics Root Mean Square Error (RMSE), Mean Absolute Percentage Error (MAPE) and  Mean Absolute Error (MAE) are used to evaluate the performance of all methods.
The detailed definitions of these metrics are stated as below:

\begin{align}
    \nonumber \text{RMSE} &=\sqrt{\frac{1} {|\S|}\sum_{u\in\S} (e^u-\hat{e}^u)^2}, \\ 
    \nonumber \text{MAPE} &=\frac{1}{|\S|} \sum_{u\in\S} \frac{|e^u-\hat{e}^u|}{\hat{e}^u}, \\
    \text{MAE} & = \frac{1}{|\S|} \sum_{u\in\S} {|e^u-\hat{e}^u|},
\end{align}
where $\hat{e}^u$ denotes the ground truth of predicted user engagement score $e^u$.

While RMSE and MAE receive higher penalties from larger values, MAPE focuses on the prediction error of samples with smaller engagement scores. 
Therefore, combining these metrics leads to more comprehensive conclusions.

\end{document}